\newtheorem{theorem}{\underline{Theorem}}
\newtheorem{lemma}{\underline{Lemma}}
\newtheorem{remark}{\underline{Remark}}
\begin{document}

\title{Sensing and Recognition When Primary User Has Multiple Power Levels}
\author{Jiachen Li, Feifei Gao, Tao Jiang, and Wen Chen
\thanks{J. Li and F. Gao are with the Tsinghua National Laboratory for Information Science and Technology, Department of Automation, Tsinghua University, Beijing, 10084, China (e-mail: lijc10@mails.tsinghua.edu.cn,
feifeigao@ieee.org).}
\thanks{T. Jiang is with the Wuhan National Laboratory  for
Optoelectronics, Department of Electronic and Information, Huazhong
University of Science and Technology, Wuhan, 430074, China (e-mail:
Tao.Jiang@ieee.org). }
\thanks{W. Chen is with the Department of Electronic Engineering,
Shanghai Jiao Tong University, Shanghai 200240, China (e-mail:
wenchen@sjtu.edu.cn). } } \maketitle \thispagestyle{empty}
\vspace{-10mm}

\begin{abstract}
In this paper, we present a new cognitive radio (CR) scenario when
the primary user (PU) operates under more than one transmit power
levels. Different from the existing studies where PU is assumed to
have only one constant transmit power, the new consideration well
matches  the practical standards, i.e., IEEE 802.11 Series, GSM,
LTE, LTE-A, etc., as well as the adaptive power concept that has
been studied  over the past decades.  The primary target in this new
CR scenario is, of course, still to detect the presence of PU.
However, there appears a secondary target as to identify the PU's
transmit power level. Compared to the existing works where the
secondary user (SU) only senses the ``on-off'' status of PU,
recognizing the power level of PU achieves more ``cognition", and
could be utilized to protect different powered PU with different
interference levels. We derived quite many closed-form results for
either the threshold expressions or the performance analysis, from
which many interesting points and discussions are raised. We then
further study the cooperative sensing strategy in this new cognitive
scenario and show its significant difference from traditional
algorithms. Numerical examples are provided to corroborate the
proposed
studies. 
\end{abstract}

\begin{keywords}
Spectrum sensing, multiple primary transmit power (MPTP), multiple
thresholds, multiple hypothesis testing, power mask, cooperative
sensing, majority decision.
\end{keywords}

\section{Introduction}
Cognitive radio (CR) has been recognized as a promising solution to
spectrum scarcity and spectrum under-utilization \cite{S.Haykin} by
allowing the secondary users (SUs) to access the spectrum of the
primary user (PU) when the latter is  idle. The key component of CR
is thus the  spectrum sensing that could detect whether a specific
frequency band is being used by PU or not.

Popular spectrum sensing techniques that have been proposed include
matched filter detection \cite{MatchedFilter}, energy detection
\cite{energydetection1,energydetection2,energydetection3}, and
cyclostationary detection \cite{Cabric,cyclo2}, among which energy
detection has received intensive attention because it requires the
least prior knowledge of PU and is very simple to implement. The
only prerequisite of the energy detection is that SU knows noiseless
receive power (possibly need to know the transmit power of PU and
path statistics from PU to SU or their product), and then one can
derive a threshold $\lambda$ such that the truly received energy
being greater than $\lambda$ tells the \emph{presence} of PU or
otherwise tells the \emph{absence} of PU. When SU is equipped with
multiple antennas, a promising sensing  technology was designed in
\cite{Eigen} where the eigenvalues of the receive covariance matrix
are used to judge the status of PU.  It is shown that the
performance of multiple antenna based spectrum sensing is much
better than that from a single antenna \cite{multiantenna} because
the former fully utilizes the correlation among antennas. On the
other side, when multiple antenna is infeasible due to the size
limitation of the wireless terminal, one can refer to the
cooperative spectrum sensing from more than one SUs to enhance the
sensing performance \cite{ZhangWei,cooperative2,cooperative3}.

There are also quite a number of works related to CR over the past
ten years. For example, parameter uncertainty based spectrum sensing
\cite{uncertainty}, design with imperfect sensing
\cite{imperfect}, sensing throughput tradeoff \cite{tradeoff},
sensing based sharing \cite{sensingsharing}, spectrum sharing
\cite{spectrumsharing}, as well as many other hybrid schemes between
CR and other technologies, e.g., games among SUs \cite{game},
sensing in OFDM system \cite{OFDM}, sensing in relay network
\cite{relay}, and many others. We may confidently claim that CR has
opened a new research field in wireless communications and has
achieved fruitful research results.

However, it is not difficult to notice that all the existing
spectrum sensing techniques
\cite{MatchedFilter}--\!\!\cite{cooperative3} as well as the related
studies \cite{uncertainty}--\!\!\cite{relay} assume PU either be
absent or be present with a constant power. Yet, it can be easily
known from the current standards, i.e., IEEE 802.11 series
\cite{IEEE}, GSM \cite{GSM}, and the future standards, i.e.,  LTE
\cite{LTE}, LTE-A \cite{LTEA} that the licensed users could be
working under different transmit power levels in order to cope with
different situations, e.g.,
environment, rate, etc. 
A typical example is in CDMA \cite{CDMA} uplink scenario when the
users are subjected to the power control to cope with the near-far
effects. In fact, varying transmit power of the licensed user is a
natural functionality and should be taken into consideration as
witnessed by so many existing literatures studying the power
allocation problem \cite{power1,power2}. Therefore, the traditional
spectrum sensing techniques, which only considers a constant power
level of PU, is not adequate to match both the practical situations
and the theoretical demands.

In fact, FCC \cite{TVwhite1} has specified the interference
protection requirements for TV white space, and the requirements for
different powered services, e.g., full-power digital TV, full-power
analog TV, low-power analog TV, low-power digital TV, etc., are
different as shown in Table \ref{Tab:1}. It is then clear that by
detecting the power levels of PU, SU could adjust its transmit power
to meet the interference requirement for different powered PU.  
Therefore, the sensing target when PU has multiple power levels
should not only be detecting the on-off status of PU but also to
identify its power level.  To the best of the authors' knowledge,
the only work that considers this multiple power transmit power
(MPTP) scenario is \cite{Chenzhong}, where the authors briefly
present  the new sensing strategy while focus more on the optimal
power allocation of SU after power-level recognition  in order to
maximize secondary throughput. Unfortunately, \cite{Chenzhong} does
not fully discuss the fundamentals issues of spectrum sensing in
MPTP and possesses many careless results.

In this paper, we provide  a thorough investigation  over the
spectrum sensing in MPTP scenario and design  two different sensing
strategies. We derive closed form expressions of decision regions
and discussed the  power-mask effect, which is shown to be an unique
phenomenon in MPTP scenario. We also provide many remarks explaining
the fundamental reasoning behind the multiple hypothesis detection
in MPTP based CR. To improve the sensing performance, we further
propose two cooperative sensing schemes, which show much difference
from the traditional cooperative sensing. The closed-form
performance analysis of all four sensing algorithms are derived too.
Various numerical examples are provided to corroborate the proposed
studies.


The rest of the paper is organized as follows. Section II presents
the system model of  MPTP scenario. In Section III, we propose two
different spectrum sensing  strategies and discuss their
relationship. In Section IV, we investigate  cooperative sensing in
MPTP scenario and derive two different algorithms based on the
majority voting and the MAP detection, respectively. In Section V,
simulations results are provided to evaluate the designed
algorithms. Finally, conclusions are drawn in Section VI.


\section{System Model}
Consider a simple CR network that consists of one PU, $K$ SUs and a
common receiver, i.e., the decision fusion center. The primary user
(PU) could either be absent or operate under one power-level $P_i,
i\in\{1,2,\ldots,N\}$. Without loss of generality, we assume
$P_{i+1}>P_{i}>0,\forall i$. As one of the earliest work considering
the MPTP scenario, we assume the values of the noiseless received
powers $\gamma_kP_i, \ \forall i$, are known at SU-$k$.  More
considerable situations, e.g.,   partially known channels or power
levels could serve as future research topics.

During the sensing time, the
$l$th received sample at SU-$k$ can be expressed as:
\begin{align}\label{eq:system model}
x_{l,k}=\left\{ \begin{array}{ll} n_{l,k}& {\cal H}_{0} \\
\sqrt{P_{i}}\sqrt{\gamma_k}s_{l,k}+n_{l,k}&{\cal H}_{i}, \ i =
1,2,...,N\end{array}\right.
\end{align}
where ${\cal H}_{0}$ denotes  the hypothesis that PU is absent while
${\cal H}_i$ indicates  PU is operating under   power-level $P_i$;
$s_{l,k}$ is the $l$th sample transmitted from PU, which is assumed
to follow complex Gaussian distribution with zero mean and unit
variance, i.e., $s_{l,k}\sim {\cal CN}(0,1)$; $n_l$ is the additive
noise that follows ${\cal CN}(0,\sigma^2_n)$ for all cases. If we
define $P_0=0$ as the power when PU is absence, then a unified
expression can be obtained as
\begin{equation}\label{eq:sample distribution}
x_{l,k}\sim{\cal CN}(0,\gamma_k P_i+\sigma^2_n), \quad \forall {\cal
H}_i,\ i\in\{0,1,\ldots,N\},
\end{equation}

Let us  define the prior probability of each state of PU as
$\text{Pr}({\cal H}_i),\ i=0,1,\ldots,N$. Then, the presence state
of PU will include all ${\cal H}_i, i\geq 1$ and will be denoted as
${\cal H}_\text{on}$. Obviously, ${\cal H}_\text{on}$  has the prior
probability $\text{Pr}({\cal H}_\text{on})=\sum_{i=1}^N
\text{Pr}({\cal H}_i)$, while the absence state of PU, denoted by
${\cal H}_\text{off}\triangleq{\cal H}_0$,  has the probability
$\text{Pr}({\cal H}_\text{off})=\text{Pr}({\cal H}_0)$.

In MPTP scenario, we define the primary target of spectrum sensing
as detecting the presence of PU, while define the secondary target
as recognizing the power-level of PU.  As mentioned in Section I, PU
may operate in different power levels, each with a different
tolerable interference level from SU. After recognizing the power
level of PU, SU could choose a proper transmit power to fulfill the
interference requirements.

\section{Spectrum Sensing at Local SU}
Let us first present how local secondary user (e.g., SU-$k$)
performs spectrum sensing in MPTP scenario. The user index $k$ is thus dropped for
notation conciseness. We propose the following two different but relevant
approaches.

\subsection{Sensing Strategy-I:  Detecting the Presence First }

Since we define the primary task in MPTP as to check the presence of
PU, we may  first verify the hypothesis  ${\cal H}_{\text{on}}/{\cal
H}_{\text{off}}$. If ${\cal H}_{\text{on}}$ is detected, then the
next step is to recognize which ${\cal H}_i, i\geq 1$ is true.
Let us assume that SU receives a total number of $M$ samples during
the sensing period, denoted as  $\mathbf{x}=[x_1,x_2,\ldots,x_M]^T$.
The ratio of the posterior probabilities between two hypothesis can
be written as
\begin{align}\label{eq:likelihood}
\eta(\mathbf{x})=&\frac{\text{Pr}({\cal
H}_\text{on}|\mathbf{x})}{\text{Pr}({\cal
H}_\text{off}|\mathbf{x})}=\frac{\sum_{i=1}^N \text{Pr}({\cal
H}_i|\mathbf{x})}{\text{Pr}({\cal
H}_0|\mathbf{x})}=\sum_{i=1}^N\frac{ p(\mathbf{x}|{\cal
H}_i)\text{Pr}({\cal H}_i)}{p(\mathbf{x}|{\cal H}_0)\text{Pr}({\cal
H}_0)}
\nonumber\\
=&\sum_{i=1}^N\frac{\text{Pr}({\cal H}_i)}{\text{Pr}({\cal
H}_\text{off})}\bigg(\frac{\sigma^2_n}{\gamma P_i
+\sigma^2_n}\bigg)^M\text{exp}\bigg\{\frac{\gamma P_i\sum_{l=1}^M
|x_l|^2}{\sigma^2_n(\gamma P_i+\sigma^2_n)} \bigg\}.
\end{align}
It is easily seen that $\eta(\mathbf{x})$ is strictly increasing
over $y\triangleq\sum_{l=1}^M |x_l|^2$, i.e., the received energy,
so the decision can be alternatively made through
\begin{equation}
y\mathop{\gtrless}\limits_{\mathcal{H}_\text{off}}^{\mathcal{H}_\text{on}}\theta,
\label{eq:y}
\end{equation}
where $\theta$ is the pre-determined parameter. Hence, the optimal
detector is the energy detector and we can re-represent
$\eta(\mathbf{x})$ as $\eta(y)$ . The parameter $\theta$ is used to
control the detection performance. For example if $\theta$ is used
to control the false alarm probability, then the detection follows
Neyman Pearson rule; If $\theta$ is set such that $\eta(y)=1$, then
the detection follows the maximum a posterior (MAP) rule .

The probability density functions (pdf) of $y$ conditioned on
$\mathcal{H}_i,\ i\in\{0,1,\ldots,N\}$ and $\mathcal{H}_\text{on}$
can be derived as
\begin{align}\label{eq:pdf Hn}
&p(y|{\cal H}_i)=\frac{y^{M-1}e^{-\frac{y}{\gamma
P_i+\sigma^2_n}}}{\Gamma(M)(\gamma
P_i+\sigma^2_n)^{M}}, \\
&\label{eq:pdf Hon} p(y|{\cal H}_\text{on})=\sum_{i=1}^{N} p(y|{\cal
H}_i)\frac{\text{Pr}{({\cal H}_i)}}{\text{Pr}({\cal
H}_\text{on})}=\frac{1}{\text{Pr}({\cal H}_\text{on})}\sum_{i=1}^{N}
p(y|{\cal H}_i)\text{Pr}{({\cal H}_i)},
\end{align}
where $\Gamma(\cdot)$ denotes the gamma function.


Similar to the conventional CR, we could resort to the false alarm
probability and the detection probability to describe the
performance of the detection,  separately calculated as
\begin{align}
\text{P}_\text{fa}(\theta)&=\text{Pr}(\mathcal{H}_\text{on}|\mathcal{H}_\text{off})=
\int_\theta^\infty
p(y|\mathcal{H}_\text{off})dy=\frac{\gamma(M,\frac{\theta}{\sigma^2_n})}
{\Gamma(M)} ,\label{eq:false alarm}\\
\text{P}_\text{d}(\theta)&=\text{Pr}(\mathcal{H}_\text{on}|\mathcal{H}_\text{on})=
\frac{1}{\text{Pr}(\mathcal{H}_\text{on})}\sum_{i=1}^N
\int_\theta^\infty\!\!
p(y|\mathcal{H}_i)\text{Pr}(\mathcal{H}_i)dy=\sum_{i=1}^N\!\!\frac{\gamma(M,\frac{\theta}{\gamma
P_i+\sigma^2_n})}{\Gamma(M)}
\frac{\text{Pr}(\mathcal{H}_i)}{\text{Pr}(\mathcal{H}_\text{on})}\label{eq:detectionprobabilty},
\end{align}
where $\gamma(\cdot,\cdot)$ denotes the lower incomplete gamma
function and $\theta$. As usual, one can adjust $\theta$ to achieve
a desired $\text{P}_\text{fa}(\theta)$,  and then the  corresponding
$\text{P}_\text{d}(\theta)$ is immediately settled.


Let us compute the threshold $\theta_\text{on/off}$ according to MAP criterion for  consistency  in the rest of the paper. From
(\ref{eq:likelihood}),  $\theta_\text{on/off}$ could be obtained from  $\eta(\theta_\text{on/off})=1$ and can be numerically
computed from the equation $\Phi(\theta)=0$, where $\Phi(\theta)$ is
defined as
\begin{equation}\label{def:Phi}
\Phi(\theta)\triangleq\sum_{i=1}^N \frac{\text{Pr}({\cal
H}_i)}{(\frac{\gamma_k P_i}{\sigma^2_n} +1)^M} \text{\large
e}^{\frac{\gamma_k P_i}{\sigma^2_n(\gamma_k
P_i+\sigma^2_n)}\cdot\theta}-\text{Pr}({\cal H}_0).
\end{equation}
It can be easily checked that
$\frac{\partial\Phi(\theta)}{\partial\theta}>0$  and $\Phi(0)<0$ as
long as $M$ is sufficiently large. Hence the solution
$\theta_\text{on/off}$ that makes $\Phi(\theta)=0$ must exist and is
definitely unique as well.

If the received energy satisfy $y>\theta_\text{on/off}$, then PU is
claimed to be present and the next step is to recognize which
power-level of PU is in use. A natural approach is to formulate
multiple hypothesis testing \cite{Kay} and apply MAP detection,
where for a hypothesis pair $({\cal H}_i, {\cal H}_j)$, $\forall
i,j\geq 1 $, ${\cal H}_i$ beats  ${\cal H}_j$  if
\begin{equation}\label{MAP multiple}
\text{Pr}({\cal H}_i|\mathbf{x},\hat{{\cal
H}}_\text{on})>\text{Pr}({\cal H}_j|\mathbf{x},\hat{{\cal
H}}_\text{on}).
\end{equation}
Here,  we use $\hat{{\cal H}}_\text{on}$ to denote that the presence
detection has been made already.\footnote{Please not that
$\hat{{\cal H}}_\text{on}$ is not the same as ${\cal H}_\text{on}$.}
From Bayes rule, there is
\begin{align}
\text{Pr}({\cal H}_i|\mathbf{x},\hat{{\cal
H}}_\text{on})=&\frac{p(\mathbf{x}|{\cal H}_i,\hat{{\cal
H}}_\text{on})\text{Pr}({\cal H}_i|\hat{{\cal
H}}_\text{on})}{p(\mathbf{x}|\hat{{\cal H}}_\text{on})} \label{bayes
rule 3}.
\end{align}
Let us define the equivalent region of $\mathbf{x}\in{\cal X}$ to
$y>\theta_{\text{on/off}}$, and then represent $\hat{{\cal
H}}_\text{on}$ as $\mathbf{x}\in{\cal X}$. Then (\ref{bayes rule 3})
can be rewritten as
\begin{align}
\text{Pr}({\cal H}_i|\mathbf{x},\hat{{\cal
H}}_\text{on})=&\frac{1}{p(\mathbf{x}|\mathbf{x}\in{\cal X})}\cdot
p(\mathbf{x}|{\cal H}_i,\mathbf{x}\in{\cal X})\cdot\text{Pr}({\cal
H}_i|\mathbf{x}\in{\cal X})\nonumber\\
=&\frac{1}{p(\mathbf{x}|\mathbf{x}\in{\cal
X})}\cdot\frac{p(\mathbf{x}|{\cal
H}_i)}{\text{Pr}(\mathbf{x}\in{\cal X}|{\cal
H}_i)}\cdot\frac{\text{Pr}(\mathbf{x}\in{\cal X}|{\cal
H}_i)\text{Pr}({\cal H}_i)}{\text{Pr}(\mathbf{x}\in{\cal
X})}\nonumber\\
=&\frac{p(\mathbf{x}|{\cal H}_i)\text{Pr}({\cal
H}_i)}{p(\mathbf{x}|\mathbf{x}\in{\cal
X})\text{Pr}(\mathbf{x}\in{\cal X})}.\label{eq:i1}
\end{align}
Note that the following equality holds from the definition of
probability density function
\begin{equation}
p(\mathbf{x}|{\cal H}_i,\mathbf{x}\in{\cal
X})=\frac{p(\mathbf{x}|{\cal H}_i)}{\text{Pr}(\mathbf{x}\in{\cal
X}|{\cal H}_i)}, \quad \mathbf{x}\in{\cal X} \label{eq:i2}
\end{equation}
and is used to derive (\ref{eq:i1}). We place $\mathbf{x}\in{\cal
X}$ in (\ref{eq:i2}) to represents that (\ref{eq:i2}) holds only for
domain $\mathbf{x}\in{\cal X}$.


Therefore, the MAP detection  (\ref{MAP multiple}) is simplified to
\begin{align}
 p(\mathbf{x}|{\cal H}_i)\text{Pr}({\cal H}_i)>p(\mathbf{x}|{\cal
H}_j)\text{Pr}({\cal H}_j),\quad \mathbf{x}\in{\cal X}  \label{MAP
multiple_simplified}
\end{align}
\begin{remark}
From (\ref{MAP multiple_simplified}), we know the MAP detection for
power levels is not related with how ${\cal H}_{\text{on}}$ is
detected, i.e., we can apply either MAP detection or Neyman Pearon
to check the presence of PU without affecting (\ref{MAP
multiple_simplified}). Nevertheless, the way to detect the presence
of PU will affect the value of $\text{Pr}({\cal
H}_i|\mathbf{x},\hat{{\cal H}}_\text{on})$ as seen from
(\ref{eq:i1}).
\end{remark}

Hence, the MAP detection of the power level can be simply  described
as
\begin{align}
\hat{i}=\arg\max_{i\in\{1,\ldots, N\}}\  p(\mathbf{x}|{\cal
H}_i)\text{Pr}({\cal H}_i), \quad \mathbf{x}\in{\cal
X}.\label{eq:MAP-final}
\end{align}


Let us then define the ratio
\begin{align}\label{eq:likelihood2}
&\xi(\mathbf{x})=\frac{ p(\mathbf{x}|{\cal H}_i)\text{Pr}({\cal
H}_i)}{ p(\mathbf{x}|{\cal H}_j)\text{Pr}({\cal
H}_j)}=\frac{\text{Pr}(\mathcal{H}_i)}{\text{Pr}(\mathcal{H}_j)}\bigg(\frac{\gamma
P_j+\sigma^2_n}{\gamma P_i
+\sigma^2_n}\bigg)^M\text{exp}\bigg\{\frac{\gamma(P_i-P_j)
\sum_{l=1}^M |x_l|^2}{(\gamma P_i+\sigma^2_n)(\gamma
P_j+\sigma^2_n)} \bigg\}.
\end{align}
Obviously, $\xi(\mathbf{x})$ is purely related with the energy
$y=\sum_{l=1}^M |x_l|^2$ (other variables are constants). Hence, the
energy detector is again optimal when recognizing the power level of
PU, and we can represent $\xi(\mathbf{x})$ by $\xi(y)$. Since
$\xi(\mathbf{x})$ is an increasing function of $y$ when $P_i>P_j$,
one can easily know that the decision region of ${\cal H}_i$,
denoted as ${\cal R}(\mathcal{H}_i)$, must be a continuous region of
$y$, and ${\cal R}(\mathcal{H}_i)$ must stay on the right side of
${\cal R}(\mathcal{H}_j)$ if $P_{i}>P_{j}$.
\begin{theorem}\label{theorem:decision range}
The decision regions  of hypothesis $\mathcal{H}_i,\
i\in\{1,2,\ldots,N\}$ are
\begin{align}\label{decision range}
{\cal R}(\mathcal{H}_i):=\left\{\begin{array}{ll}
y\in\big(\theta_{\text{on/off}},\ \min_{1<j\leq
N}\Theta(1,j)\big)&i=1\\
y\in\big(\max\{\theta_{\text{on/off}},\max_{1\leq
j<i}\Theta(i,j)\},\ \min_{i<j\leq
N}\Theta(i,j)\big),&1<i<N\\
y\in\big(\max\{\theta_{\text{on/off}},\max_{1\leq
j<N}\Theta(N,j)\},\ +\infty\Big),&i=N\end{array}\right.
\end{align}
where $\Theta(i,j)$ is defined as
\begin{align}\label{df:big theta}
&\Theta(i,j)\triangleq\frac{(\gamma P_i+\sigma^2_n)(\gamma
P_j+\sigma^2_n)}{\gamma(P_i-P_j)}\ln\bigg[\bigg(\frac{\gamma
P_i+\sigma^2_n}{\gamma
P_j+\sigma^2_n}\bigg)^{M}\frac{\text{Pr}({\cal
H}_j)}{\text{Pr}({\cal H}_i)}\bigg].
\end{align}
\end{theorem}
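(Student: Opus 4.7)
The plan is to reduce the multi-hypothesis MAP rule \eqref{eq:MAP-final} to a collection of pairwise log-likelihood comparisons, exploit the monotonicity of $\xi$ in $y$ that is already observed just before the theorem, and assemble $\mathcal{R}(\mathcal{H}_i)$ as the intersection of the resulting half-lines in $y$ together with the presence region $y>\theta_{\text{on/off}}$ inherited from \eqref{eq:y}.

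First, I would note that $\hat{i}=i$ in \eqref{eq:MAP-final} if and only if $\mathcal{H}_i$ beats every competing $\mathcal{H}_j$ pairwise, i.e., the ratio $\xi$ in \eqref{eq:likelihood2} (computed between $\mathcal{H}_i$ and $\mathcal{H}_j$) exceeds $1$ for every $j\neq i$. Taking logarithms in \eqref{eq:likelihood2} and isolating $y$ gives the pairwise boundary $y=\Theta(i,j)$, with $\Theta$ exactly as in \eqref{df:big theta}; a short algebraic check confirms $\Theta(i,j)=\Theta(j,i)$, so the threshold separating any two hypotheses is unambiguous.

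Next, the coefficient of $y$ inside the exponent of \eqref{eq:likelihood2} has the sign of $P_i-P_j$, hence $\xi(y)$ is strictly increasing when $P_i>P_j$ and strictly decreasing when $P_i<P_j$. Consequently, for $j<i$ (where $P_j<P_i$), $\mathcal{H}_i$ beats $\mathcal{H}_j$ iff $y>\Theta(i,j)$; for $j>i$ (where $P_j>P_i$), $\mathcal{H}_i$ beats $\mathcal{H}_j$ iff $y<\Theta(i,j)$. Intersecting these half-lines over all $j\neq i$ and imposing the presence constraint yields
\[
\mathcal{R}(\mathcal{H}_i)=\{y>\theta_{\text{on/off}}\}\,\cap\,\bigcap_{j<i}\{y>\Theta(i,j)\}\,\cap\,\bigcap_{j>i}\{y<\Theta(i,j)\},
\]
which collapses to the three cases in \eqref{decision range} after the obvious $\max$/$\min$ bookkeeping and the boundary conventions (no lower $\Theta$-constraint when $i=1$, no upper $\Theta$-constraint when $i=N$).

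The main subtlety I foresee is not computational but notational: one must point out that the intersection is automatically an interval in $y$, since each pairwise constraint is a half-line; this is exactly the ``continuous region'' remark made by the authors just before the theorem statement. If $\max_{j<i}\Theta(i,j)\geq\min_{j>i}\Theta(i,j)$, the region for $\mathcal{H}_i$ becomes empty, which is precisely the power-mask effect highlighted later in the paper, so this possibility is not excluded and no extra nondegeneracy argument is required.
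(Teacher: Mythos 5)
Your proposal is correct and follows essentially the same route as the paper: substitute the likelihood ratio \eqref{eq:likelihood2} into the pairwise MAP comparison, use the sign of $P_i-P_j$ to turn each comparison into a half-line $y\gtrless\Theta(i,j)$, intersect over all $j\neq i$ together with the constraint $y>\theta_{\text{on/off}}$ from the first-stage detection, and allow the resulting interval to be empty (the power-mask case). The added observations ($\Theta(i,j)=\Theta(j,i)$ and the automatic interval structure) are harmless refinements of what the paper already states informally before the theorem.
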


\begin{proof}
Substituting (\ref{eq:likelihood2}) into (\ref{MAP multiple_simplified})
yields
\begin{equation*}
\frac{y\cdot\gamma(P_i-P_j)}{(\gamma P_i+\sigma^2_n)(\gamma
P_j+\sigma^2_n)}>\ln\bigg[\bigg(\frac{\gamma P_i+\sigma^2_n}{\gamma
P_j+\sigma^2_n}\bigg)^{M}\frac{\text{Pr}({\cal
H}_j)}{\text{Pr}({\cal H}_i)}\bigg], \quad \forall i,j\geq 1.
\end{equation*}
If $P_i>P_j$, i.e., $i>j$, then there is
\begin{equation*}
y>\frac{(\gamma P_i+\sigma^2_n)(\gamma
P_j+\sigma^2_n)}{\gamma(P_i-P_j)}\ln\bigg[\bigg(\frac{\gamma
P_i+\sigma^2_n}{\gamma
P_j+\sigma^2_n}\bigg)^{M}\frac{\text{Pr}({\cal
H}_j)}{\text{Pr}({\cal H}_i)}\bigg], \quad \forall i>j;
\end{equation*}
If $P_i<P_j$, i.e., $i<j$,  then  there is
\begin{equation*}
y<\frac{(\gamma P_i+\sigma^2_n)(\gamma
P_j+\sigma^2_n)}{\gamma(P_i-P_j)}\ln\bigg[\bigg(\frac{\gamma
P_i+\sigma^2_n}{\gamma
P_j+\sigma^2_n}\bigg)^{M}\frac{\text{Pr}({\cal
H}_j)}{\text{Pr}({\cal H}_i)}\bigg] , \quad \forall i<j.
\end{equation*}

Then for $1<i<N$, the lower bound of ${\cal R}(\mathcal{H}_i)$
should be $y>\max\limits_{1\leq j<i}\Theta(i,j)$ and the upper bound
should be $y<\min\limits_{i<j\leq N}\Theta(i,j)$. Moreover, the MAP
detection is defined on the domain $\mathbf{x}\in{\cal X}$, i.e.,
$y>\theta_{\text{on/off}}$, so all  decision regions of non-zero
power should stay in $(\theta_{\text{on/off}},+\infty)$. Bearing in
mind that $\theta_{\text{on/off}}$ may be greater than
$\max\limits_{1\leq j<i}\Theta(i,j)$ for some $i$, the proof is
completed.
\end{proof}
\begin{remark}
The decision region of ${\cal H}_0$, i.e., the absence of PU can be
expressed in a unified way as
\begin{align}
{\cal R}(\mathcal{H}_0):=  y\in\big(0,\ \theta_{\text{on/off}}\big).
\end{align}

\end{remark}

\begin{remark}
Compared to the traditional ``on-off'' based sensing  that has only
one threshold,  the new scenario MPTP needs multiple thresholds to
separate different power levels, as shown in Fig.
\ref{fig:thresholds}, where $\theta_1\triangleq
\theta_{\text{on/off}}$ is defined for consistence.
\end{remark}

An interesting and special phenomenon in MPTP happens when the
computed lower bound of a specific region ${\cal R}(\mathcal{H}_{i_0})$
is greater than the upper bound, e.g.,
\begin{equation}\label{submerge criteria} \max\{\theta_{\text{on/off}},\ \max_{
1\leq j<i_0}\Theta(i_0,j)\}>\min_{i_0<j\leq N}\Theta(i_0,j)
\end{equation}
holds for some specific $1<i_0<N$. Once this happens then ${\cal R}(\mathcal{H}_{i_0})$
is empty and the power level
$P_{i_0}$ can never be detected. We call this new phenomenon in
MPTP as \emph{power-mask} effect.  Hence, in   Fig.
\ref{fig:thresholds}., the number of the thresholds may be less than or equal to $N$.

\begin{remark}
If $\theta_{\text{on/off}}>\max_{ 1\leq
j<i_0}\Theta(i_0,j)>\min_{i_0<j\leq N}\Theta(i_0,j)$, then $P_{i_0}$
is masked from left by $P_0$, while if $\max_{ 1\leq
j<i_0}\Theta(i_0,j)>\theta_{\text{on/off}}>\min_{i_0<j\leq
N}\Theta(i_0,j)$, then $P_{i_0}$ is masked from both sides by $P_{i_{0}-1}$ and
$P_{i_{0}+1}$.
\end{remark}

Some intuitive explanation for power mask is provided here. First
note that the bounds in the decision region ${\cal
R}(\mathcal{H}_i)$ are affected by many parameters, i.e., $P_i$,
$\gamma$, $\sigma_n^2$, $\text{Pr}({\cal H}_i)$. If the prior
probability $\text{Pr}({\cal H}_{i_0})$ is very small, i.e., the
power level $P_{i_0}$ is seldom used by PU. Then $P_{i_0}$ may
easily be ``ignored'' by SU and is then masked. Another example is
that, if $P_{i_0}$ is closed to $P_{i_0-1}$ and $P_{i_0+1}$ and if
$\sigma_n^2$ is larger, then it is very possible that $P_{i_0}$ will
be masked by $P_{i_0-1}$ or $P_{i_0+1}$ due to the large uncertainty
caused by the noise.

\begin{remark}
Note that the leftmost level $P_0$ and the rightmost level $P_N$ cannot be masked and are always
detectable.
\end{remark}

\begin{remark}
When power-mask happens for a specific $i_0$, it means that the
$P({\cal H}_{i_0}|\mathbf{x})$ cannot beat any other $P({\cal
H}_{i}|\mathbf{x}),i\neq i_0$ but $P({\cal
H}_{i_0}|\mathbf{x})$ could still  possess non-zero value. Hence, it is
possible to design some sophisticated approach which considers this ``soft''
information and remove the power-mask effect. Nevertheless, the
corresponding discussion is out of the scope of this paper and will
be left for future research.
\end{remark}
\begin{remark}
It is of interest to check whether the decision regions for ${\cal
H}_i$'s are continuously connected for two consecutively detectable
indices $1<i_0<i_0+1<N$, i.e., for those not masked power level, check whether $ \min_{i_0<j\leq N}\Theta(i_0,j)=\max_{
1\leq j<i_0+1}\Theta(i_0+1,j)$ holds. Unfortunately, due to the discrete nature of
the power mask effect, we cannot mathematically prove this property.
Nevertheless, it can be easily known that, for any $y$, there is
always a corresponding decision according to the MAP detection
(\ref{eq:MAP-final}). Therefore, there should be no gap between any
two consecutive decision regions.
\end{remark}

A special case that affects the power-mask effect appears when
$\text{Pr}(P_i)=\text{Pr}(P_j),\ \forall i,j\in\{1,2,\ldots,N\}$.
\begin{lemma}
If $\text{Pr}(P_i)=\text{Pr}(P_j)$ holds for $\forall
i,j\in\{1,2,\ldots,N\}$, then $\Theta(i,j)$ is not related to
$P({\cal H}_i)$ and is an increasing function over $P_j$ for any
$P_i$.
\end{lemma}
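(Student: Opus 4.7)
The plan is to substitute the equal-prior assumption directly into the definition of $\Theta(i,j)$ in (\ref{df:big theta}) so that the ratio $\text{Pr}(\mathcal{H}_j)/\text{Pr}(\mathcal{H}_i)$ collapses to $1$, leaving
\begin{equation*}
\Theta(i,j)=\frac{M(\gamma P_i+\sigma_n^2)(\gamma P_j+\sigma_n^2)}{\gamma(P_i-P_j)}\ln\!\bigg(\frac{\gamma P_i+\sigma_n^2}{\gamma P_j+\sigma_n^2}\bigg).
\end{equation*}
From this form the first claim is immediate: the prior probabilities have disappeared, so $\Theta(i,j)$ does not depend on $\text{Pr}(\mathcal{H}_i)$.

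For the monotonicity claim, I would introduce the change of variables $u\triangleq \gamma P_i+\sigma_n^2$ (fixed) and $v\triangleq \gamma P_j+\sigma_n^2$ (variable, $v>0$), so that $\gamma(P_i-P_j)=u-v$ and $\Theta(i,j)=M\,f(v)$ with $f(v)=\frac{uv\ln(u/v)}{u-v}$. Since $v$ is an increasing function of $P_j$, it suffices to show $f'(v)>0$ for $v\ne u$. I would carry out the quotient-rule differentiation and simplify the numerator; a short calculation yields
\begin{equation*}
f'(v)=\frac{u\bigl[u\ln(u/v)-u+v\bigr]}{(u-v)^2}.
\end{equation*}
The denominator is positive and $u>0$, so the sign of $f'(v)$ is the sign of $u\ln(u/v)-(u-v)$.

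The key step is then to invoke the standard elementary inequality $\ln t<t-1$ for all $t>0$, $t\ne 1$ (with equality only at $t=1$). Setting $t=v/u$ gives $\ln(v/u)<v/u-1$, i.e., $u\ln(u/v)>u-v$, so the bracketed quantity is strictly positive whenever $v\ne u$. Hence $f'(v)>0$ and $\Theta(i,j)$ is strictly increasing in $P_j$ for every fixed $P_i$. I do not anticipate a substantive obstacle; the only mild subtlety is the apparent $0/0$ singularity at $P_j=P_i$, which is a removable singularity (L'Hôpital gives $\Theta(i,i)=M(\gamma P_i+\sigma_n^2)$) and lies outside the range of distinct indices considered in Theorem \ref{theorem:decision range}, so it does not affect the monotonicity statement.
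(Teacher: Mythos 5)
Your proof is correct and follows essentially the same route as the paper: after the equal priors cancel, you differentiate $\Theta(i,j)$ with respect to $P_j$ (your substitution $u,v$ is just a reparametrization of the paper's $z=\gamma(P_j-P_i)/(\gamma P_i+\sigma_n^2)$, and your $f'(v)$ coincides with the paper's $\frac{M\gamma}{z^2}[z-\ln(1+z)]$ up to the chain-rule factor $\gamma$), and you conclude with the same elementary inequality $\ln t\le t-1$, even treating the removable singularity at $P_j=P_i$ as the paper does in its footnote.
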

\begin{proof}
In this case, take the partial derivative of $\Theta(i,j)$ over
$P_j$ as
\begin{align}
\frac{\partial\Theta(i,j)}{\partial P_j}&=\frac{M(\gamma
P_i+\sigma^2_n)^2}{\gamma(P_j-P_i)^2}\bigg\{\frac{\gamma(P_j-P_i)}{\gamma
P_i+\sigma^2_n}-\ln\Big[1+\frac{\gamma(P_j-P_i)}{\gamma
P_i+\sigma^2_n}\Big]\bigg\}.
\end{align}
Redefining $z=\frac{\gamma(P_j-P_i)}{\gamma P_i+\sigma^2_n}$, it can
be easily known  that $z\in(-1,+\infty)$ for any $i$. We then obtain
$$\frac{\partial\Theta(i,j)}{\partial P_j}=\frac{M\gamma}{z^2}\big[z-\ln(1+z)\big].$$
It is also clear that $z-\ln(1+z)\geq0$ for $z>-1$. Hence,
$\frac{\partial\Theta(i,j)}{\partial P_j}\geq 0$ holds\footnote{Note that for $z=0$ the value of $\frac{\partial\Theta(i,j)}{\partial P_j}$ is obtained from Hospital's rule and is $\frac{M\gamma}{2}>0$.}   for all
possible $P_j$ and $\Theta(i,j)$ is an increasing function over
$P_j$.
\end{proof}

According to Lemma 1,  when $\text{Pr}(P_i)=\text{Pr}(P_j),i,j\geq 1$ there is
\begin{align}
\label{eq:special} \max_{1\leq
j<i}\Theta(i,j)=\Theta(i,i-1)<\Theta(i,i+1)=\min_{i<j\leq
N}\Theta(i,j).
\end{align}
Therefore, the non-zero power levels cannot mask each other, while
the power mask effect may only happen when $P_0$ mask the power
levels on its right side.

\begin{remark}
When $\text{Pr}(P_i)=\text{Pr}(P_j),i,j\geq 1$, denote the first
power level that is not masked by $P_0$ as $P_{i_0}$. It can be
readily check that  all the power levels on its right sides, i.e.,
$P_i,i>i_0$ has $\Theta(i,i+1)=\Theta(i+1,i)$, and hence the
decision regions for all ${\cal H}_i$'s are mathematically proved to
be connected in this case.
\end{remark}

Note that the decision range ${\cal R}(\mathcal{H}_i)$ can be
determined in a prior manner by calculating all $\Theta(i,j)$ and
$\theta_{\text{on/off}}$ in advance.  To unify  our discussion, let
us use $\theta_i, i\in\{1,2,\ldots,N\}$ to denote the threshold
between ${\cal R}(\mathcal{H}_{i-1})$ and ${\cal R}(\mathcal{H}_i)$, and define
$\theta_0\triangleq0$, $\theta_{N+1}\triangleq+\infty$ for
completeness. Meanwhile, if $P_{i_0}$ cannot be detected due to
power-mask effect, we denote $\theta_{i_0}=\theta_{i0+1}$ so that
the corresponding decision range $[\theta_{i_0}, \theta_{i_0+1})$ is
empty.  

To characterize the performance of the  spectrum sensing in MPTP, purely resorting to
$\text{P}_\text{d}(\theta)$ and $\text{P}_\text{fa}(\theta)$ is not adequate.  We
should calculate all the probabilities when SU makes the decision as
hypothesis $\mathcal{H}_j$ while PU is actually transmitting with
$P_i$, that is
\begin{align}\label{decision probability}
\text{Pr}(\mathcal{H}_j|\mathcal{H}_i)=&\int_{\theta_j}^{\theta_{j+1}}p(y|\mathcal{H}_i)dy=\frac{\gamma\left(M,\frac{\theta_{j}}{\gamma
P_i+\sigma^2_n}\right)} {\Gamma(M)}-\frac{\gamma\left(M,\frac{\theta_{j+1}}
{\gamma P_i+\sigma^2_n}\right)}{\Gamma(M)},\quad \forall i,j,
\end{align}
which is also defined  as \emph{decision probability} in this paper.
Obviously, the decision probabilities of those masked power
are zeros.


Then, $\text{P}_\text{d}(\theta)$ and $\text{P}_\text{fa}(\theta)$
can be easily obtained from the summations of the corresponding
$\text{Pr}(\mathcal{H}_j|\mathcal{H}_i)$. Moreover we may  introduce
a new technical term
\begin{equation}
P_\text{dis1}
=\frac{1}{\text{Pr}(\cal
H_\text{on})}\sum_{i=1}^{N}\text{Pr}({\cal H}_i|{\cal H}_i)P({\cal
H}_i),\label{eq:P_dis1}
\end{equation}
named as discrimination probability to describe
the recognition capability for our secondary target.

\subsection{Sensing Strategy II: Recognize Power Level First}

Another reasonable approach  to achieve both our targets is to directly detecting the
power level of PU by treating $P_0$ as an \emph{equivalent} power level (but with zero value) as
other non-zero $P_i, i\geq 1$. The presence or the absence
can be immediately found after the power level index is detected.

%

From MAP based multiple hypothesis testing, the optimal detection can be
stated as
\begin{align}\label{MAP-2}
\hat{i}=\arg\max_{i\in\{0,1,\ldots,N\}}\ \text{Pr}({\cal H}_i|\mathbf{x}).
\end{align}

Since the expression of (\ref{MAP-2}) is, mathematically, the
same as (\ref{eq:MAP-final}) but includes one more index $0$, the previous results can be immediately
modified  here. For example, the decision region is computed as
\begin{align}\label{decision range2}
{\cal R}(\mathcal{H}_i):=\left\{\begin{array}{ll}
y\in\big(0,\ \min_{0<j\leq
N}\Theta(0,j)\big)&i=0\\
y\in\big(\max_{0\leq j<i}\Theta(i,j),\ \min_{i<j\leq
N}\Theta(i,j)\big),&0<i<N\\
y\in\big(\max_{0\leq j<N}\Theta(N,j),\
+\infty\Big),&i=N\end{array}\right.
\end{align}
where $\Theta(i,j)$ is given in (\ref{df:big theta}). Hence, the
power-mask effect also exists if $\max_{ 1\leq j<i_0}\Theta(i_0,j)
>\min_{i_0<j\leq N}\Theta(i_0,j)$ for some $i_0$. Let us use
$\phi_i$ to represent  the thresholds separating ${\cal
R}(\mathcal{H}_{i-1})$ and ${\cal R}(\mathcal{H}_i)$. Then the
decision probability is given in (\ref{decision probability}) with
$\theta_i$ being replaced by $\phi_i$.

\begin{remark}
In sensing strategy-II, once again, both  $P_0$ and $P_N$ cannot be
masked and are always detectable.
\end{remark}

Due to the similarity between the decision regions of the
two sensing approaches, i.e., (\ref{decision range}) and
(\ref{decision range2}), one natural and interest  question  arises:
are all the thresholds or parts of the thresholds the same in these two sensing approaches?
Of all the thresholds, the first one
$\theta_\text{on/off}$ that separates the absence decision and
presence decision is of special importance. Then we  provide the
following lemma.
\begin{lemma}\label{lemma:2}
Let $\phi_\text{on/off}\triangleq\phi_1$ be the threshold between
detecting absence and detecting   presence in sensing strategy II,
there is $\phi_\text{on/off}>\theta_\text{on/off}$.
\end{lemma}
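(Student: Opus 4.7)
The plan is to compare the two defining conditions at a single point, namely the Strategy~I threshold $\theta_{\text{on/off}}$, and show that at this point the Strategy~II MAP rule still strictly prefers $\mathcal{H}_0$ over every individual $\mathcal{H}_i$, $i \geq 1$. Since in Strategy~II the absence region is $(0,\phi_{\text{on/off}})$, this forces $\theta_{\text{on/off}}$ to lie strictly inside that region, giving $\theta_{\text{on/off}} < \phi_{\text{on/off}}$.

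The first step is to rewrite the defining equation $\eta(\theta_{\text{on/off}})=1$ from \eqref{eq:likelihood} as
\begin{equation*}
p(\theta_{\text{on/off}}|\mathcal{H}_0)\,\text{Pr}(\mathcal{H}_0)=\sum_{i=1}^{N} p(\theta_{\text{on/off}}|\mathcal{H}_i)\,\text{Pr}(\mathcal{H}_i).
\end{equation*}
Since every term on the right is strictly positive and there are at least $N\geq 1$ of them, the sum is strictly larger than any single summand, so for every $i\in\{1,\ldots,N\}$,
\begin{equation*}
p(\theta_{\text{on/off}}|\mathcal{H}_0)\,\text{Pr}(\mathcal{H}_0)\;>\;p(\theta_{\text{on/off}}|\mathcal{H}_i)\,\text{Pr}(\mathcal{H}_i).
\end{equation*}

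Next, I interpret this inequality through the MAP rule of Strategy~II, namely \eqref{MAP-2} specialized to the pair $(\mathcal{H}_0,\mathcal{H}_i)$: at the value $y=\theta_{\text{on/off}}$, hypothesis $\mathcal{H}_0$ strictly beats every $\mathcal{H}_i$ with $i\geq 1$. Hence $\theta_{\text{on/off}}$ belongs to $\mathcal{R}(\mathcal{H}_0)$ in Strategy~II. From the first branch of \eqref{decision range2}, $\mathcal{R}(\mathcal{H}_0)=(0,\phi_{\text{on/off}})$ where $\phi_{\text{on/off}}=\min_{0<j\leq N}\Theta(0,j)$, so this membership is exactly $\theta_{\text{on/off}}<\phi_{\text{on/off}}$, which is the claim.

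The argument requires essentially no calculation beyond parsing the definitions; the only subtle point, and what I would flag as the main thing to check, is that the strict inequality indeed holds (it does, because every $\text{Pr}(\mathcal{H}_i)>0$ and every conditional density is strictly positive), and that the appeal to $\mathcal{R}(\mathcal{H}_0)=(0,\phi_{\text{on/off}})$ is still valid in the presence of the power-mask effect. The latter is fine because power-masking can only empty the decision regions of some interior $\mathcal{H}_{i_0}$'s, but $\mathcal{H}_0$ is never masked (Remark preceding Lemma~1 / the $i=0$ case of \eqref{decision range2}), so its decision region is always the interval up to the first threshold $\phi_{\text{on/off}}$.
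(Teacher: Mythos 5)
Your proof is correct, but it takes the mirror-image route of the paper's own argument. The paper first solves for $\phi_\text{on/off}$ in closed form from the boundary condition $\text{Pr}({\cal H}_0|\mathbf{x})=\text{Pr}({\cal H}_{j_0}|\mathbf{x})$, with $j_0$ the first unmasked nonzero level, obtaining (\ref{eq:phi_on_off}); it then substitutes this value into the Strategy-I function $\Phi$ of (\ref{def:Phi}), where the $i=j_0$ term exactly cancels $\text{Pr}({\cal H}_0)$, so $\Phi(\phi_\text{on/off})$ is a sum of the remaining positive terms, and the monotonicity of $\Phi$ together with $\Phi(\theta_\text{on/off})=0$ yields the claim. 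You instead evaluate the Strategy-II MAP comparison at $\theta_\text{on/off}$: the defining equation $\Phi(\theta_\text{on/off})=0$ says the weighted likelihood of ${\cal H}_0$ equals the \emph{sum} of the weighted likelihoods of all ${\cal H}_i$, $i\geq 1$, hence strictly exceeds each single one, so $\theta_\text{on/off}$ lies in the interior of ${\cal R}({\cal H}_0)=(0,\phi_\text{on/off})$ from (\ref{decision range2}). Both arguments hinge on the same elementary fact (a sum of strictly positive terms exceeds any one of them) plus the monotone-likelihood-ratio structure that makes the decision regions intervals; yours buys a computation-free proof that never needs the explicit value of $\phi_\text{on/off}$ nor the identification of $j_0$ or of which levels are masked, while the paper's route produces the closed-form expression (\ref{eq:phi_on_off}) as a useful by-product. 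One small caveat, common to both proofs: strictness requires $N\geq 2$ and all priors strictly positive (for $N=1$ the two thresholds coincide), so your phrase ``at least $N\geq 1$ of them'' should read $N\geq 2$; this is harmless in the MPTP setting, where multiple nonzero power levels are assumed.
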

\begin{proof}
Considering the power-mask effect,  $\phi_\text{on/off}$ can  be computed from $\text{Pr}({\cal H}_0|\mathbf{x})=\text{Pr}({\cal H}_{j_0}|\mathbf{x})$ where $j_0$ is the index
of the first non-zero power that is not masked. From (\ref{eq:likelihood2}), we can derive a unique $\phi_\text{on/off}$ as
\begin{equation}\label{eq:phi_on_off}
\phi_\text{on/off}=\frac{\sigma^2_n(\gamma P_{j_0}+\sigma^2_n)}{\gamma
P_{j_0}}\cdot\ln\bigg[\frac{\text{Pr}(\mathcal{H}_0)}{\text{Pr}(\mathcal{H}_{j_0})}\Big(\frac{\gamma
P_{j_0}}{\sigma^2_n}+1\Big)^M\bigg].
\end{equation}
Let us then compute $\Phi(\phi_\text{on/off})$ from (\ref{def:Phi}), which yields
\begin{align*}
\Phi(\phi_\text{on/off})&=\sum_{i=1}^N \frac{\text{Pr}({\cal H}_i)}{(\frac{\gamma P_i}{\sigma^2_n} +1)^M} \text{\large e}^{\frac{\gamma P_i}{\sigma^2_n(\gamma P_i+\sigma^2_n)}\cdot\phi_\text{on/off}}-\text{Pr}({\cal H}_0) \\
&=\sum_{i\neq j_0}^N \frac{\text{Pr}({\cal H}_i)}{(\frac{\gamma
P_i}{\sigma^2_n} +1)^M} \text{\large e}^{\frac{\gamma
P_i}{2\sigma^2_n(\gamma P_i+\sigma^2_n)}\cdot\phi_\text{on/off}}>0.
\end{align*}
Since $\Phi(\theta_\text{on/off})=0$ and $\Phi(\theta)$ is an increasing function over $\theta$, it is obvious that $\phi_\text{on/off}>\theta_\text{on/off}$. The lemma is proved.
\end{proof}

Lemma \ref{lemma:2} suggests that sensing strategy-II claims more absence of PU than sensing strategy-I. 
In general, when we compare (\ref{decision range2}) and
(\ref{decision range}), it is easy to note that   $\max_{0\leq
j<i}\Theta(i,j)$ may be greater than $\max_{1\leq j<i}\Theta(i,j)$
if $\Theta(i,0)$ is the bigger than other $\Theta(i,j),\ 0<j<i$.
When this happens, it also means that $P_0$ masks all power levels
smaller than $P_{i+1}$ in sensing strategy-II. Hence, the lower
bound of ${\cal R}(\mathcal{H}_i)$ in sensing strategy-II is
$\phi_\text{on/off}$ while the lower bound of ${\cal
R}(\mathcal{H}_i)$ in sensing strategy-I is
$\max\{\theta_{\text{on/off}},\ \max_{ 1\leq j<i_0}\Theta(i_0,j)\}$.
Combing Lemma \ref{lemma:2}, it is then clear that the lower bound
of ${\cal R}(\mathcal{H}_i)$ in sensing strategy-II is bigger than
that of sensing strategy-I. Nevertheless, when  $\Theta(i,0)$ is not
the dominant one in $\max_{0\leq j<i}\Theta(i,j)$, i.e.,
$\text{Pr}({\cal H}_0|\mathbf{x})$ is not the biggest among all
$\text{Pr}({\cal H}_i|\mathbf{x})$, then the thresholds for the two
sensing strategies are the same.


\begin{remark}
A special case happens when $\text{Pr}({\cal H}_i)=\text{Pr}({\cal
H}_j), \forall i,j$. In this case, (\ref{eq:special}) holds and
$\Theta(i,j)$ is not related with $\text{Pr}({\cal H}_i)$. Hence,
except $\theta_\text{on/off}$ and $\phi_\text{on/off}$, all the
other thresholds from both sensing strategies are the same, i.e.,
$\theta_i=\phi_i$. Moreover, the power mask effect is completely
removed.
\end{remark}

\begin{remark}
Since we treat $P_0$ as an equal state as other non-zero $P_i$'s in
sensing strategy-II, the discrimination probability in the second
sensing approach could be also be defined as
\begin{equation}\label{eq:P_dis2}
P_\text{dis2}=\sum_{i=0}^{N}\text{Pr}({\cal H}_i|{\cal H}_i)P({\cal
H}_i).
\end{equation}
\end{remark}
%

\subsection{Fundamental Rationale Behind  Two Spectrum Sensing
Strategies} \label{sec:rationale}

After presenting two different spectrum sensing approaches, both
seemingly reasonable, a natural question arises: which one is better
and why? Let us explain from MAP detection point of view.

When MAP detection is applied in the first step of sensing
strategy-I, then the obtained $\theta_{\text{on/off}}$ is optimal in
terms of minimizing the following error
\begin{align}
&\arg\min\ \text{Pr}({\cal H}_{\text{off}}|{\cal
H}_\text{on})\text{Pr}({\cal H}_\text{on})+\text{Pr}({\cal
H}_\text{on}|{\cal H}_\text{off})\text{Pr}({\cal H}_\text{off})\nonumber\\
=&\arg\min\ \sum_{i=1}^N \text{Pr}({\cal H}_0|{\cal
H}_i)\text{Pr}({\cal H}_i)+ \sum_{i=1}^N \text{Pr}({\cal H}_i|{\cal
H}_0)\text{Pr}({\cal H}_0).
\end{align}

On the other side, applying MAP in sensing strategy-II that directly
detects the power level is optimal in terms of minimizing the
following error
\begin{equation}
\arg\min\ \sum_{j\neq i}\text{Pr}({\cal H}_j|{\cal
H}_i)\text{Pr}({\cal H}_i).
\end{equation}

Obviously, sensing strategy-I does not consider the errors
$\text{Pr}({\cal H}_j|{\cal H}_{i}), i,j\geq 1$ when detecting the
presence of PU while   sensing strategy-II takes into account of all
error probabilities $\text{Pr}({\cal H}_j|{\cal H}_{i}),\forall i,
j$ all at once.

In general, if identifying the presence of PU is a more important
target than discriminating the power levels of PU for example in
traditional sensing based CR, then sensing strategy-I is preferable.
However for sensing based sharing CR scheme and when the penalty due
to the wrong interference protection is very high, then the sensing
strategy-II could be preferable.

\begin{remark}
In fact, both the strategies falls into the Baye's Risk based
multiple hypothesis test where the optimization criterion is to
minimize
\begin{equation}
\sum_{i,j}C_{i,j}\text{Pr}({\cal H}_j|{\cal H}_i)\text{Pr}({\cal
H}_i),
\end{equation}
and $C_{i,j}$ is the price or the cost for detecting ${\cal H}_j$
when ${\cal H}_i$ is true. Obviously, the value of $C_{i,j}$ should
be set according to practical requirements and can be different in
different applications.
\end{remark}

\section{Cooperative Spectrum Sensing}
Similar to the conventional cooperative sensing
\cite{ZhangWei,cooperative2,cooperative3}, we assume each SU
performs sensing, either with strategy-I or strategy-II,  and
forwards the result to a fusion center. Note that, local SUs only
need to forward the power-level index $i\in\{0,1,\ldots, N\}$ where
the ``on-off'' information of PU is automatically embedded. It is
then easily known that the existing \textit{Logic-AND} (LA),
\textit{Logic-OR} (LR) and their general form \textit{k out of N}
(KON) based fusion rules, mainly designed for binary results ``0''
and ``1'', are no longer applicable when the forwarded indices fall
into  $\{0,1,\ldots, N\}$.  Hence, it is necessary to design new
cooperative sensing schemes  for MPTP scenario. In this paper, we
propose two different fusion rules, i.e., the majority fusion and
the optimal fusion.



\subsection{Majority Decision Fusion}

After performing the  local spectrum sensing, SU-$k$ makes its own
decision as ${\cal H}_{i_k}$ and then forwards the index $i_k$ to
the fusion center who combines these results into a $K\times 1$
vector $\mathbf{b}=[j_1,j_2,\ldots, j_k]$. The probability of any specific $\mathbf{b}$
can be easily computed as
\begin{equation}
\text{Pr}(\mathbf{b}|{\cal H}_i)=\prod_{k=1}^K \text{Pr}({\cal
H}_{j_k}^{(k)}|{\cal H}_i),
\end{equation}
 where the superscript $(\cdot)^{(k)}$ represents the
user index. Note that the total number of possible $\mathbf{b}$ is
$(N+1)^K$. From a given $\mathbf{b}$, we can immediately formulate a
voting pool $\vec{d}=(d_{0},\dots,d_{N})$, where $d_{i}$ denotes the
 number of SUs that claim ${\cal H}_i$. Obviously, there
is $\sum_{i=0}^{N} d_{i}=K$. Define the mapping function from
$\mathbf{b}$ to $\vec{d}$ as ${\cal M}(\mathbf{b})=\vec{d}$, which
can be easily obtained in an offline manner once $N$ and $K$ are
fixed. It is not difficult  to find that the total number of
possible $\vec{d}\ $ is ${{K+N}\choose N} = \frac{(K+N)!}{K! N!}$.
The probability of any specific $\vec{d}$ can then be computed as
\begin{align}
\text{Pr}(\vec{d}\ |{\cal H}_i)=\sum_{\mathbf{b}:\ {\cal
M}(\mathbf{b})=\vec{d}} \text{Pr}(\mathbf{b}|{\cal H}_i).
\end{align}

\begin{remark}
If we make the same assumption as did in \cite{ZhangWei}, i.e., the received signal
at each SU experiences almost identical path loss,\footnote{This assumption
 holds  when the distance between any two SUs is small
compared to the distance from PU to any one of the SUs. }  then each
SU has the same decision probability $\text{Pr}({\cal H}_j|{\cal
H}_i), \forall i,j\in\{0,1,\dots,N\}$, and the expression of
$\text{Pr}\big(\vec{d}|{\cal H}_i\big)$ can simplified  as
\begin{align}\label{eq:fusion}
\text{Pr}\big(\vec{d}|{\cal H}_i\big)=&{\!K\!\choose
d_{0}}\text{Pr}({\cal H}_{0}|{\cal H}_i)^{d_{0}}{K\!-\!d_{0}\choose
\!d_{1}\!}\text{Pr}({\cal H}_{1}|{\cal H}_i)^{d_{1}} \cdots
{K-\sum_{l=0}^{N-1}d_{l}\choose d_{N}}\text{Pr}({\cal H}_{N}|{\cal
H}_i)^{d_{N}}\nonumber\\
=&\frac{K!}{\prod_{l=0}^N d_l!}\prod_{n=0}^N\text{Pr}({\cal
H}_{n}|{\cal H}_i)^{d_{n}}.
\end{align}
\end{remark}

A simple and reasonable way to make the decision fusion is to count
the majority claims from SUs, i.e., pick
\begin{equation}
\hat{i}=\arg\max_i\  d_{i}. \label{eq:c1}
\end{equation}
However, a special case happens  when $d_0=\arg\max_{i}\ d_{i}$ while
$d_0<\sum_{i=1}^Nd_i$. In this case, (\ref{eq:c1}) will output
$\hat{i}=0$ and claims the absence of PU, but in fact more users
claim the presence of the PU. Therefore, we should check the
presence of PU before applying the majority rule when the primary
target is to detect the ``on-off'' status of PU.

Let us define $d_\text{off}\triangleq d_0$ and
$d_\text{on}\triangleq\sum_{i=1}^N d_i$. Then, the decision rule can
be expressed as
\begin{equation} \label{eq:c3}
d_\text{on}\mathop{\gtrless}\limits_{\mathcal{H}_\text{off}}^{\mathcal{H}_\text{on}}d_\text{off},
\end{equation}
which can be simplified as
\begin{equation}
d_\text{on}\mathop{\gtrless}\limits_{\mathcal{H}_\text{off}}^{\mathcal{H}_\text{on}}K/2.
\end{equation}
Note that, a special case happens when $K$ is even and
$d_{\text{on}}=K/2$. In this case, the final decision can either be
made as ``on'' or ``off'' because they are equally probable. In this
rest of the discussion, we claim ``on'' if $d_{\text{on}}=K/2$.

If PU is detected to be present, the next step is to discriminate
which power level is in use by majority law
\begin{equation}
\hat{i}=\arg\max_{i\geq1}\  d_{i}. \label{eq:c2}
\end{equation}

In order to fully describe the performance of the majority decision,
we need to refer to the decision probability, denoted as
$\text{Pr}_m({\cal H}_j|{\cal H}_i)$. It can be computed that
\begin{align}
\text{Pr}_m({\cal H}_j|{\cal H}_i)=\sum_{\vec{d}\in{\cal S}_{m_j}}
\text{Pr}(\vec{d}\ |{\cal H}_i),\label{eq:Prm}
\end{align}
where the set ${\cal S}_{m_j}$ is defined as
\begin{align}
{\cal S}_{m_j}=& \left\{ \begin{array}{ll} \{\vec{d}\ |\ d_0>K/2\}
&\mbox{if $j=0$}
\\\{\vec{d}\ |\ d_j=\max\{d_1,\dots,d_N\},\ d_0\leq K/2\}&\mbox{if $j\geq1$}
\end{array} \right..
\end{align}
There exist special cases when more than one state ${\hat i}$
simultaneously achieve the maximum number of votes. In this
situation, one can  choose any of them as the final decision since
they are equally probable. In this paper, we always choose the
largest value of $\hat{j}$  as the final decision if this happens,
and all the theoretical and numerical results in the rest of this
paper are based on this consideration.

The set  ${\cal S}_{m_j}$ can be obtained from a mapping function in
an offline manner. Hence, we can easily build tables and mapping
functions as illustrated in Fig. \ref{fig:mapping}, and compute
$\text{Pr}_m({\cal H}_j|{\cal H}_i)$ \emph{a prior}. Fortunately, a
more explicit expression for  $\text{Pr}_m({\cal H}_j|{\cal H}_i)$
in majority voting can be derived as in the following theorem.



\begin{theorem}\label{close-form}
The majority decision fusion has the decision probability,
\begin{align}\label{eq:maj}
\nonumber &\text{Pr}_m({\cal H}_j|{\cal
H}_i)\\
&= \sum_{d_0=0}^{\lfloor\frac{K}{2}
\rfloor}\sum_{d_j=\lceil\frac{K-d_0}{N}\rceil}^{K-d_0}
\sum_{d_1=max\{0,\alpha_1\}}^{min\{d_j,\beta_1\}}\cdot\cdot\sum_{d_{j-1}=max\{0,\alpha_{j-1}\}}^{min\{d_j,\beta_{j-1}\}}
\sum_{d_{j+1}=max\{0,\alpha_{j+1}\}}^{min\{d_j-1,\beta_{j+1}\}}\cdot\cdot\sum_{d_N=max\{0,\alpha_N\}}^{min\{d_j-1,\beta_N\}}
\!\!\text{Pr}\big(\vec{d}|\mathcal{H}_i\big)
\end{align}
for $j\geq1$, and
\begin{align}
\text{Pr}_m({\cal H}_0|{\cal H}_i)=\sum_{d_0=\lfloor \frac{K}{2}
\rfloor+1}^{K}\sum_{d_1=0}^{K-d_0}\sum_{d_2=0}^{K-d_0-d_1}\ldots
\sum_{d_{N-1}=0}^{K-\sum_{l=0}^{N-2}d_l}
\text{Pr}\big(\vec{d}|\mathcal{H}_i\big),\label{eq:maj2}
\end{align}
where $\lceil\cdot\rceil$, and $\lfloor\cdot\rfloor$ denote the ceiling function and floor function, respectively. Moreover,
$\alpha_n$ and $\beta_n$ are defined as
\begin{equation}\label{alpha beta define}
\begin{aligned}
\alpha_n =& \left\{ \begin{array}{ll}
K-\sum_{i=0}^{n-1}d_i-(N-n)d_j+N-j &\mbox{if $1\leq n<j$}
\\K-\sum_{i=0}^{n-1}d_i-(N-n)d_j+N-n &\mbox{if $j<n\leq N$}
\end{array} \right.,
\\\beta_n =& \left\{ \begin{array}{ll} K-\sum_{i=0}^{n-1}d_i-d_j &\mbox{if $1\leq n<j$}
\\K-\sum_{i=0}^{n-1}d_i &\mbox{if $j<n\leq N$}
\end{array} \right..
\end{aligned}
\end{equation}
\end{theorem}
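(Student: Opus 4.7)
The plan is to prove the formulas by parametrizing the set $\mathcal{S}_{m_j}$ explicitly as an iterated integer sum and verifying that the stated bounds $\alpha_n,\beta_n$ together with the outer bounds on $d_0$ and $d_j$ exactly capture (i) the vote-budget constraint $\sum_{i=0}^{N}d_i=K$, (ii) the tie-breaking rule (which gives $d_n\leq d_j$ for $1\leq n<j$ and $d_n\leq d_j-1$ for $j<n\leq N$), and (iii) the presence-detection constraint $d_0\leq K/2$ (or $d_0>K/2$ in the $j=0$ case). The starting point is (\ref{eq:Prm}), which already decomposes $\text{Pr}_m(\mathcal{H}_j|\mathcal{H}_i)$ into a sum over $\mathcal{S}_{m_j}$; what remains is to rewrite that sum as nested integer sums whose ranges are easy to read off.

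The $j=0$ case is the warm-up: $\mathcal{S}_{m_0}=\{\vec d:d_0>K/2\}$, so first sum $d_0$ from $\lfloor K/2\rfloor+1$ to $K$; the remaining budget $K-d_0$ is distributed freely among $d_1,\dots,d_N$ with no tie-breaking constraints, and iterating $d_1,\dots,d_{N-1}$ with $d_N$ implicitly fixed as the residue yields (\ref{eq:maj2}) directly.

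For the $j\geq 1$ case, I would first sum over $d_0$ on $[0,\lfloor K/2\rfloor]$, then over the winner count $d_j$. The lower bound $d_j\geq\lceil(K-d_0)/N\rceil$ follows because $d_j$ is the maximum of $d_1,\dots,d_N$ and therefore cannot fall below their average, while the upper bound $d_j\leq K-d_0$ is the vote-budget. Next I would iterate $d_n$ for $n\in\{1,\dots,N\}\setminus\{j\}$ in increasing order of $n$. For each $d_n$, the upper bound is the smaller of the tie-breaking cap ($d_j$ if $n<j$, $d_j-1$ if $n>j$) and the remaining budget $K-\sum_{i=0}^{n-1}d_i$, adjusted by $-d_j$ when $n<j$ because $d_j$ has been committed but is not yet reflected in the running partial sum; this reproduces $\beta_n$ as stated in (\ref{alpha beta define}).

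The main obstacle, and the most delicate step, is the lower bound $\alpha_n$: one must commit enough votes to $d_n$ now so that the remaining indices $\{n+1,\dots,N\}\setminus\{j\}$ can absorb the leftover budget without exceeding their individual caps. Counting the total residual capacity gives $(j-1-n)d_j+(N-j)(d_j-1)=(N-1-n)d_j-(N-j)$ when $1\leq n<j$ (the first group of indices is capped at $d_j$, the second at $d_j-1$) and $(N-n)(d_j-1)$ when $j<n\leq N$. Demanding that the leftover budget $K-\sum_{i=0}^{n-1}d_i-d_j-d_n$ (resp.\ $K-\sum_{i=0}^{n-1}d_i-d_n$) not exceed this capacity yields $\alpha_n=K-\sum_{i=0}^{n-1}d_i-(N-n)d_j+(N-j)$ for $1\leq n<j$ and $\alpha_n=K-\sum_{i=0}^{n-1}d_i-(N-n)d_j+(N-n)$ for $j<n\leq N$, matching (\ref{alpha beta define}); the $\max\{0,\alpha_n\}$ simply handles the case where the feasibility constraint is vacuous. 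Substituting these bounds into (\ref{eq:Prm}) with the i.i.d.\ multinomial $\text{Pr}(\vec d|\mathcal{H}_i)$ of (\ref{eq:fusion}) yields (\ref{eq:maj}) and completes the proof.
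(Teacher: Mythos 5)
Your proposal is correct and follows essentially the same route as the paper's proof: enumerate $\mathcal{S}_{m_j}$ by nested sums in the order $d_0$, then $d_j$, then the remaining indices, with the upper bounds $\beta_n$ coming from the tie-breaking cap and the residual vote budget, and the lower bounds $\alpha_n$ from requiring the leftover budget to fit under the caps of the not-yet-assigned indices (your capacity computations $(j-1-n)d_j+(N-j)(d_j-1)$ and $(N-n)(d_j-1)$ reproduce the paper's exactly). The $j=0$ case is likewise handled identically, so no gap to report.
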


\begin{proof}
To calculate the decision probability from (\ref{eq:Prm}),
we need to find all candidates $\vec{d}$ in ${\cal S}_{m_j}$. In other words, we need to determine
the range of elements  $d_j$'s, $j=0,1,\ldots,N$ in $\vec{d}$.

Let us start form  $j\geq1$. As shown in
$\mathcal{S}_{m_j}$, $\vec{d}$ must satisfy $d_0\leq K/2$, so
the range of $d_0$ should be from $0$ to
$\lfloor\frac{K}{2}\rfloor$. Moreover, since
$d_{j}=\max\{d_1,\dots,d_N\}$, the lower bound of $d_j$ must be no less than $\frac{K-d_0}{N}$, otherwise
there will always be another $d_{j_1}\neq d_j$ but satisfies $d_{j_1}>d_j$. Therefore the range of $d_j$ is from
$\lceil\frac{K-d_0}{N}\rceil$ to $K-d_0$.
We then separately determine the range of $d_n$ for $1\leq n<j$ and $j<n\leq N$, respectively.

1) When $n<j$, the upper bound of $d_n$ must be less than or equal
to the unassigned value of $K$, which is $K$ minus all the values
that have already been assigned to $d_i, 0\leq i<n$ and $d_j$, i.e.
$K-\sum_{i=0}^{n-1}d_i-d_j$. Bearing in mind the the constraint
$d_n\leq d_j$, the upper bound can be expressed as
\begin{align*}
d_{n}^{\text{upper}}=\min\left\{d_j,\
K-\sum_{i=0}^{n-1}d_i-d_j\right\},\quad n<j.
\end{align*}
As for the lower bound, $d_n$ should not be too small to allow any
other undetermined $d_k (n<k\leq N$ and $k\neq j)$ be greater than
$d_j$. The extreme case happens when  all the undetermined $d_k$'s get
their highest values, i.e., $d_k$ equals to $d_j$ for $n<k<j$ while $d_k$ is $d_j-1$ for
$j<k\leq N$. Then the summation of these $d_k$ is
$(j-n-1)d_j+(N-j)(d_j-1)$. Combining this result  as well as the
constraint $d_n\geq0$, we get the lower bound of $d_n$ as
\begin{align*}
d_{n}^{\text{lower}}=\max\left\{K-d_j-\sum_{i=0}^{n-1}d_i-(j-n-1)d_j
-(N-j)(d_j-1),\ 0 \right\},\quad n<j.
\end{align*}

2) When $n>j$, the maximum value of $d_n$ can only be $d_j-1$, thus
the upper bound of the summand changes to
\begin{align*}
d_{n}^{\text{upper}}=\min\left\{d_j-1,\ K-\sum_{i=0}^{n-1}d_i\right\},\quad n>j.
\end{align*}
Similar to the previous discussion, the maximum summation of all the
undetermined $d_k(n<k\leq N)$ in this situation is $(N-n)(d_j-1)$,
so the lower bound is
\begin{align*}
d_{n}^{\text{lower}}=\max\left\{K-\sum_{i=0}^{n-1}d_i-(N-n)(d_j-1),\
0\right\},\quad n>j.
\end{align*}

When we use $\alpha_n$ and $\beta_n$ in (\ref{alpha beta define}) to
simplify the expression of the range of $d_n$'s, the   equality (\ref{eq:maj}) for
$j \geq 1$ is proved.

As for $j=0$, the only constraint
for $\vec{d}$ is $d_0>K/2$ from $\mathcal{S}_{m_j}$, so the
summation range of $d_0$ must be from $\lfloor \frac{K}{2}\rfloor+1$
to $N$. All the others  $d_n,
n\in\{1,2,\ldots,N\}$ can be freely chosen as long as $\sum_{i=0}^N d_i=K$. If we assign the
values for $d_i$'s one by one, then for any $d_n$, its lowest possible value is 0 while its highest
possible value is $K-\sum_{i=0}^{n-1} d_i$. Note that, $d_N$ is a fixed value
when all the previous $d_i, 0\leq i<N$ are chosen and does not need to be included in
the summand.  Then, the   equality (\ref{eq:maj2})
for $j=0$ is proved.
\end{proof}

Once $\text{Pr}_m({\cal H}_j|{\cal H}_i)$ is derived, then the false
alarm, the detection probability as well as the discrimination
probability for majority cooperation can be immediately  obtained
as the summation of corresponding $\text{Pr}_m({\cal H}_j|{\cal
H}_i)$.

\begin{remark} Though majority
decision fusion rule has been widely accepted in many research
areas, the analytical approach to study its performance, e.g, obtaining
$\text{Pr}_m({\cal H}_j|{\cal H}_i)$ from $\text{Pr}({\cal H}_j|{\cal H}_i)$,  has never been fully discussed to the best of
the authors' knowledge.
\end{remark}

%




In Fig. \ref{fig:decision_prob}, we provide one example to verify
several $\text{Pr}_m({\cal H}_j|{\cal H}_i)$ with $N=4$, $K=5$ and
average SNR=$-12$dB. It is  clearly seen that the numerical results
match the theoretical ones very well.

\subsection{Optimal Decision Fusion}

Though the majority decision fusion rule is
 very simple and  effective, it
does have some drawbacks which limit the performance. For example,
when $\text{Pr}({\cal H}_i)\gg\text{Pr}({\cal H}_j)$, even if the
detection result is $d_i<d_j$,  it is still  possible that ${\cal
H}_i$ is truer than ${\cal H}_j$. The reason  is that majority
decision is a type of `hard'' decision and is not ``soft'' enough to
count in the prior probability of each hypothesis.

From the probabilistic point of view, the optimal decision fusion with the observation $\vec{d}$ should follow MAP criterion \cite{Kay}. Similar to
majority decision fusion, we need to first to make a decision about
the presence of PU before recognizing the power levels, i.e.,
\begin{equation}\label{eq:opt fusion}
\text{Pr}\big(\mathcal{H}_\text{off}|\vec{d}\
\big)\mathop{\gtrless}\limits_{\mathcal{H}_\text{on}}^{\mathcal{H}_\text{off}}\text{Pr}\big(\mathcal{H}_\text{on}|\vec{d}\
\big).
\end{equation}
From Bayes rule, there is
\begin{align}
\text{Pr}\big({\cal H}_n|\vec{d}\ \big)=\frac{\text{Pr}\big(\vec{d}\
|{\cal H}_n\big)\text{Pr}({\cal H}_n)}{\text{Pr}\big(\vec{d}\
\big)}.
\end{align}
Hence, (\ref{eq:opt fusion}) can be simplified as
\begin{equation}
\text{Pr}\big(\vec{d}\
|\mathcal{H}_0\big)\text{Pr}(\mathcal{H}_0)\mathop{\gtrless}\limits_{\mathcal{H}_\text{on}}^{\mathcal{H}_\text{off}}\sum_{i=1}^N
\text{Pr}\big(\vec{d}\ |\mathcal{H}_i\big)\text{Pr}(\mathcal{H}_i).
\end{equation}

If PU is detected to be present, we continue to recognize the power
level of PU. Following the similar steps from (\ref{MAP multiple})
to (\ref{MAP multiple_simplified}), the detection rule is
\begin{equation}
\hat i=\arg\max_{i\geq1}\ \text{Pr}\big({\cal H}_i|\vec{d}\
\big)=\arg\max_{i\geq1}\ \text{Pr}\big(\vec{d}\ |{\cal
H}_i\big)\text{Pr}({\cal H}_i). \label{eq:opti}
\end{equation}

The decision probability of the optimal decision fusion can be
expressed as
\begin{align}
&\text{Pr}_o({\cal H}_j|{\cal H}_i)=\sum_{\vec{d}\in{\cal
S}_o}\text{Pr}(\vec{d}\ |{\cal H}_i)
\end{align}
where the set ${\cal S}_{o_j}$ is defined as
\begin{align*}
{\cal S}_{o_j}=&\left\{ \begin{array}{ll} \{\vec{d}\ \big|\
\text{those}\ \vec{d}\ \text{that claim absence from}\ (\ref{eq:opt
fusion})\} &\mbox{if $j=0$}
\\ \{\vec{d}\ \big|\ \text{those}\ \vec{d}\ \text{that result in}\ \hat{i}=j\ \text{in}\ (\ref{eq:opti})\ \text{but claim presence from}\ (\ref{eq:opt fusion})\}  &\mbox{if $j\geq1$}
\end{array} \right.
\end{align*}
The elements in  ${\cal S}_{o_j}$ is an implicit function of
$\text{Pr}({\cal H}_i)$ and $\text{Pr}({\cal H}_j|{\cal H}_i)$,
which makes it difficult to obtain an explicit expression of
$\text{Pr}_o({\cal H}_j|{\cal H}_i)$. Nevertheless,
$\text{Pr}_o({\cal H}_j|{\cal H}_i)$ is  the summation of those
$\text{Pr}(\vec{d}\ |{\cal H}_i)$ whose $\vec{d}$ could result in
the decision of $j$ and these $\vec{d}$ can be found from a
predetermined mapping,
as did in Fig. \ref{fig:mapping}. 

\begin{remark}
Since $\text{Pr}({\cal H}_i)$ and $\text{Pr}({\cal H}_j|{\cal
H}_i)$ are real continuous values, the probability for obtaining
more than one maximum index  from (\ref{eq:opti}) is 0, and the
corresponding discussion is not necessary.
\end{remark}

\begin{remark}
If we assume the same fading gain for all SUs \cite{ZhangWei} and
apply (\ref{eq:fusion}), then a more concise form of the decision (\ref{eq:opti})
can be obtained as
\begin{align}
\hat{i}=&\arg\max_{i\geq1}\ \text{Pr}({\cal
H}_i)\prod_{n=0}^{N}\text{Pr}({\cal H}_n|{\cal
H}_i)^{d_{n}}=\arg\max_{i\geq1}\ \log\text{Pr}({\cal
H}_i)+\sum_{n=0}^{N}d_n\log\text{Pr}({\cal H}_n|{\cal
H}_i).
\end{align}
\end{remark}

\section{Simulations}\label{sec:simulation}

In this section, we resort to numerical examples  to evaluate the
proposed studies. Four levels of primary transmit power are assumed,
while the corresponding prior probabilities are set as
$\text{Pr}({\cal H}_0)=0.5$, and $\text{Pr}({\cal H}_i)=0.125,
i=1,2,3,4$. The channel gain $\gamma$ and the noise variances are
taken as units. The power levels satisfy $P_1:P_2:P_3:P_4=3:5:7:9$,
and the average SNR is defined as
 $\frac{1}{4}\sum_{i=1}^4 P_i/{\sigma^2_n}$.

\subsection{Sensing with A Single SU}

In Fig. \ref{fig:detection_prob},  we evaluate the performance of
detecting the presence of PU versus the number of   samples for the
proposed sensing strategies. It is seen that sensing strategy-I
works better than sensing strategy-II, especially when the sampling
number is small, which matches our discussion in Section
\ref{sec:rationale} that $\theta_\text{on/off}<\phi_\text{on/off}$.
Nevertheless, the gaps between the two sensing strategies reduce
when the number of samplings become larger or the PU's SNR becomes
higher. This implies that when the sensing conditions becomes
better, then difference of the two sensing strategies gradually
diminishes and the choice of sensing strategies becomes less
important.

One the other hand,  Fig. \ref{fig:discrimination_prob} displays the
performance of discriminating the power level versus the number of
samples of the proposed  sensing strategies. The discrimination
probability follows the definition in (\ref{eq:P_dis2}), i.e., we
treat absence as an equivalent power level with 0 value. From
Fig.~\ref{fig:discrimination_prob}, we see that sensing strategy-II
works slight better than strategy-I but the difference diminishes
when SNR becomes higher. This phenomenon right fits our analysis in
subsection \ref{sec:rationale} that strategy-II takes into account
all error cases.

Next we demonstrate the sensing performance versus SNR in Fig.
\ref{fig:singleSU_vs_SNR} for both the detection probability and the
discrimination probability defined in (\ref{eq:P_dis1}). It is seen
that sensing strategy-I out performs sensing strategy-II in terms of
both detection probability and discrimination probability. For
sensing strategy-I, the difference between detection probability and
discrimination probability is very large at low SNR. The reason is
that even if PU is detected to be present, the strategy-I actually
makes many mistakes about PU's actual power level. However, for
sensing strategy-II, $\mathcal{H}_0$ may mask all the other states
when SNR is low, which almost ruins the detecting ability.
Nevertheless, since sensing strategy II is originally defined for
discriminating all $\mathcal{H}_i,i\geq 0$, once we include
$\mathcal{H}_0$, the discrimination probability could outperform
sensing strategy-I as has been demonstrated in Fig.
\ref{fig:discrimination_prob}.

\subsection{Cooperative Sensing}

For cooperative sensing examples, the local sensing results of all
users are made from strategy-I.

In Fig. \ref{fig:detect_and_discrim_Prob}, we show the detection
probability as well as the discrimination probability versus the
number of receive samples when five SUs cooperative to make the
final decision.  The SNR is taken as -12 dB. Compared with the
sensing performance of single local SU in Fig.
\ref{fig:detection_prob} and Fig. \ref{fig:discrimination_prob}, it
is clearly seen that the performance is greatly improved when
cooperative scheme is applied. Moreover,  the optimal decision
outperforms the majority decision at all sample numbers.
Nevertheless, the optimal decision needs to dynamically build the
mapping function and is not as simple as the majority decision
fusion.

The average error detection probability of the two fusion rules
versus the number of samples is displayed in Fig.
\ref{fig:Error_Detection_Prob}. Five SUs cooperate to make the final
decision and SNR is taken as -12 dB. Moreover, $\delta$ denotes the
offset of index between the true power level and the detected power
level, e.g., the detection error probability with $\delta=1$ is the
summation of those $\text{Pr}({\cal H}_{j}|{\cal
H}_i)\text{Pr}(\mathcal{H}_i)$ satisfying $|i-j|=1$, i.e.,
$\sum_{|i-j|=1}\text{Pr}({\cal H}_{j}|{\cal
H}_i)\text{Pr}(\mathcal{H}_i)$.  It is seen that the error
probability decreases extensively when $\delta$ increases. This is
not unexpected because the chances for making  a wrong decision to
the farer power level should be smaller. A very important indication
to practical design is that,  SUs may only pay
 attention to those errors with smaller $\delta$ and set the corresponding cost
 values in  Bayes Risk detection.

Next we show the performance of cooperative sensing versus SNR in
Fig. \ref{fig:CoSensing_Pd_Pdis_vs_SNR}. Not surprisingly, the
optimal decision fusion rule outperforms the majority decision
fusion in all SNR range. Besides, the discrimination probability for
both sensing rules also get closer to the detection probability as
the SNR grows. Moreover, the gaps between the optimal fusion and the
majority fusion diminish when SNR becomes larger.

In the last example, we show the detection probability and
discrimination probability versus the number of SU in Fig.
\ref{fig:4l_Pd_and_Pdis_vs_K}, with SNR$=-12$dB and $M=5000$.
Obviously, both decision rules provide better results when the
number of the SUs increases and the optimal decision rule always
outperforms the majority rule, which matches our intuition very
well. Moreover, increasing the number of SU after a certain amount
may not be very helpful, e.g., eight SUs in majority sensing
strategy.

\section{Conclusions}

In this paper, we investigated a new CR scenario, i.e., MPTP, that
embraces multiple primary powers, which both matches the practical
transmission and fits the theocratical demands of adapting the
transmit power. We designed two different spectrum sensing
strategies  which are shown to possess different but correlated
optimization criteria. Most results, e.g., threshold expressions,
probabilities, are derived in closed-forms. We present a thorough
discussion over all kinds of aspects of the new spectrum sensing
strategies, including the power mask effects and its reasoning, the
new definition of performance metrics, as well as the rationales
behind.  Moreover, we developed two different cooperative sensing
algorithms which are shown to be very different from the traditional
cooperative schemes. Various simulations are provided lately to
corroborate the proposed studies. It is then believed that there
could exist many new problems in MPTP waiting for exploitation,
while in the mean time, many existing studies for traditional CR
deserve re-investigation.

\newpage

\begin{table}[t]
\centering \caption{\label{Tab:1}Protected Service Contour Levels}
\begin{tabular}{|c|c|c|}
\hline
Type of TV & Band/Channel & Protected \\
station &  & Contour E-field \\
& & Level(dBu) \\
\hline
Full Power & Low VHF (2-6) & 47\\
\cline{2-3}
Analog TV & High VHF (7-13) & 56 \\
\cline{2-3}
& UHF (14-69) & 64 \\
\hline
Low Power & Low VHF (2-6) & 62\\
\cline{2-3}
Analog TV & High VHF (7-13) & 68 \\
\cline{2-3}
& UHF (14-69) & 74 \\
\hline
Full Power & Low VHF (2-6) & 28\\
\cline{2-3}
Digital TV & High VHF (7-13) & 36 \\
\cline{2-3}
& UHF (14-51) & 41 \\
\hline
Low Power & Low VHF (2-6) & 43\\
\cline{2-3}
Digital TV & High VHF (7-13) & 48 \\
\cline{2-3}
& UHF (14-51) & 51 \\
\hline
\end{tabular}
\end{table}


\begin{figure}[t]
\centering
\includegraphics[width=100mm]{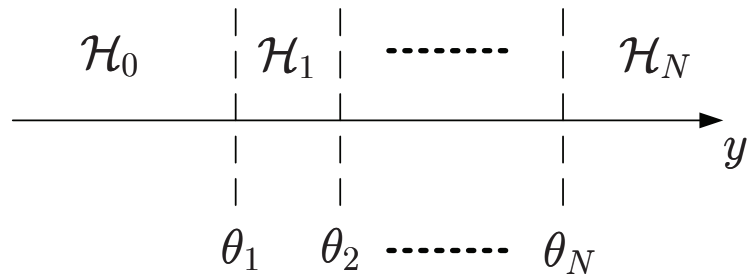}
\caption{Multiple power level detections from multiple thresholds.} 
\label{fig:thresholds}
\end{figure}

\begin{figure}[t]
\centering
\includegraphics[width=150mm]{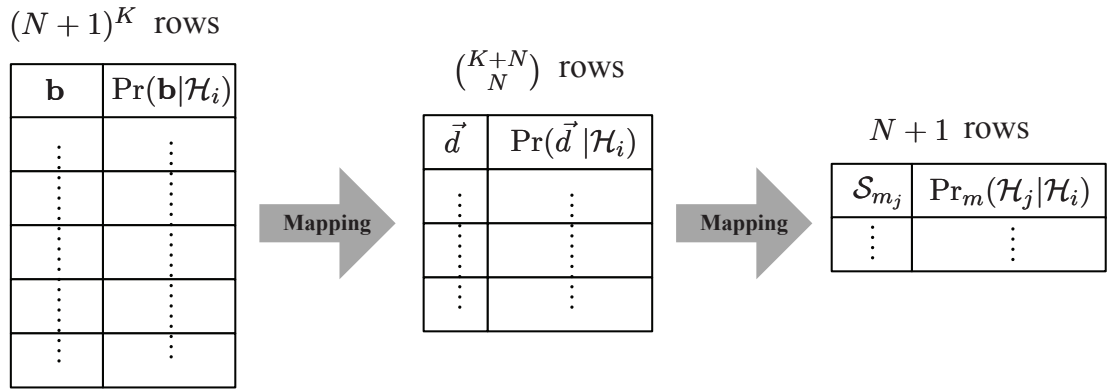}
\caption{Illustration of mapping from $\mathbf{b}$ to $\vec{d}$ and then to ${\cal S}_{m_j}$.} 
\label{fig:mapping}
\end{figure}

\begin{figure}
\centering
\includegraphics[width=120mm]{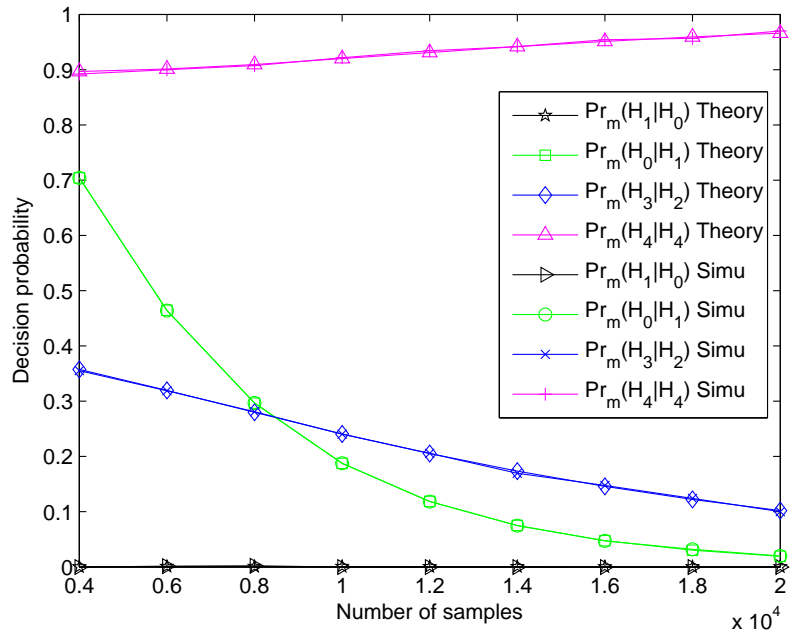}
\caption{Theoretical analysis and numerical results for decision probability under majority decision fusion.} 
\label{fig:decision_prob}
\end{figure}

\begin{figure}
\centering
\includegraphics[width=120mm]{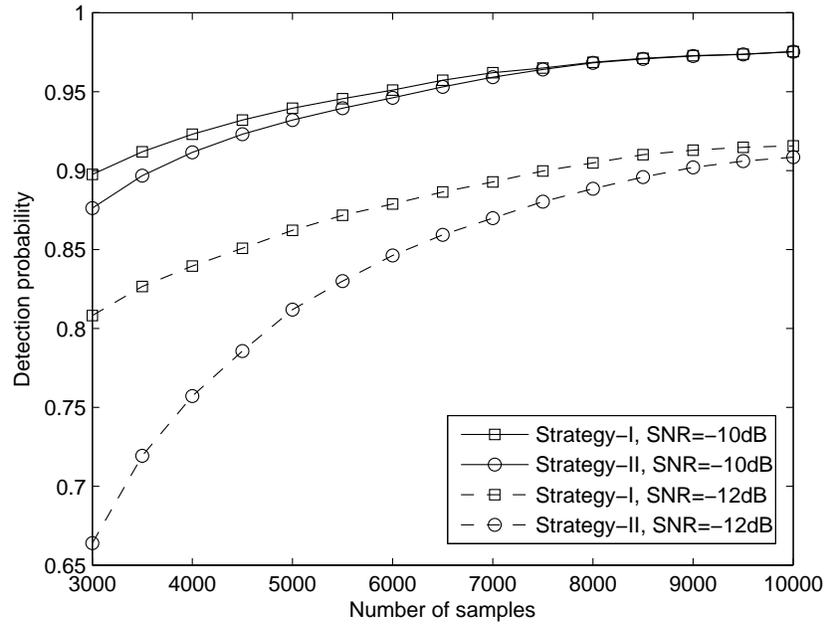}
\caption{The detection probability of local SU versus the number of samples with average SNR$=-10\text{dB},-12\text{dB}$.} 
\label{fig:detection_prob}
\end{figure}

\begin{figure}
\centering
\includegraphics[width=120mm]{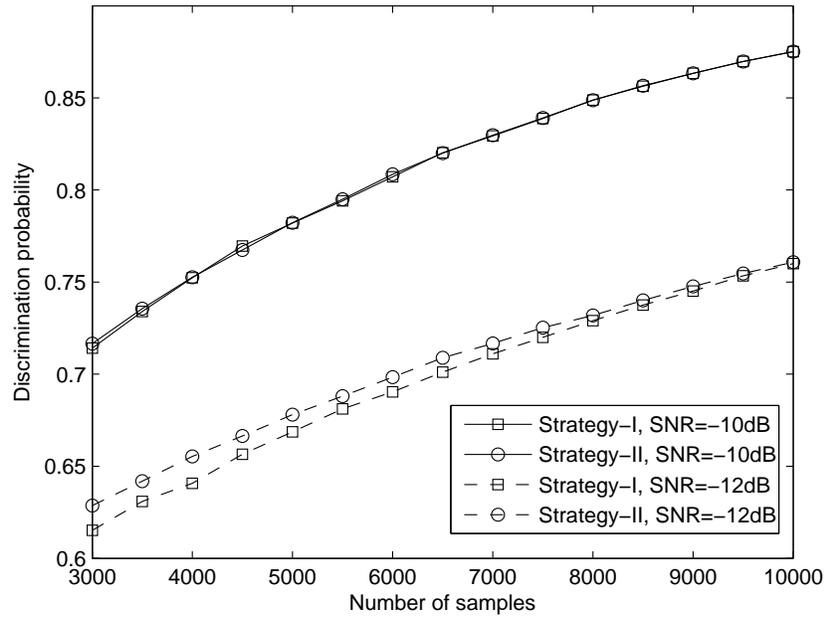}
\caption{The discrimination probability of local SU versus the number of samples.} 
\label{fig:discrimination_prob}
\end{figure}

\begin{figure}
\centering
\includegraphics[width=120mm]{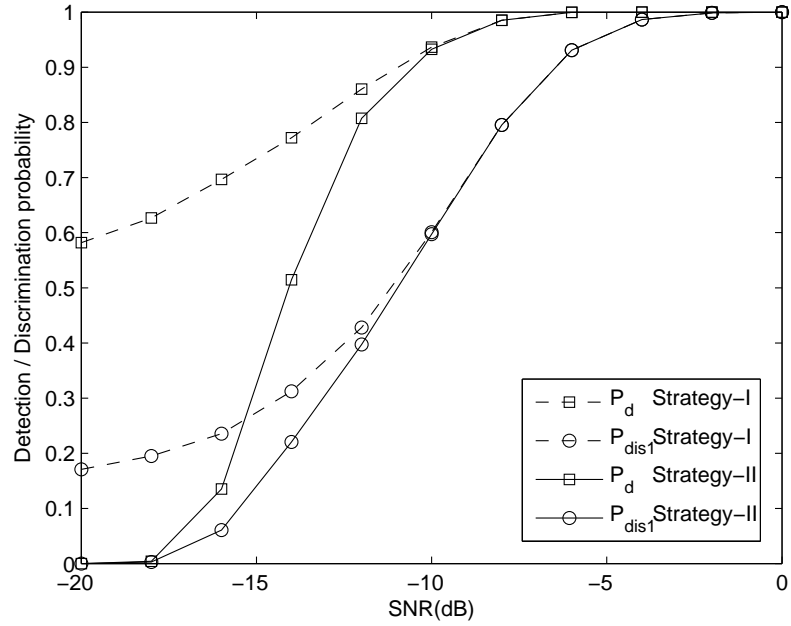}
\caption{The detection and discrimination probability of local SU versus SNR with M=5000.} 
\label{fig:singleSU_vs_SNR}
\end{figure}

\begin{figure}
\centering
\includegraphics[width=120mm]{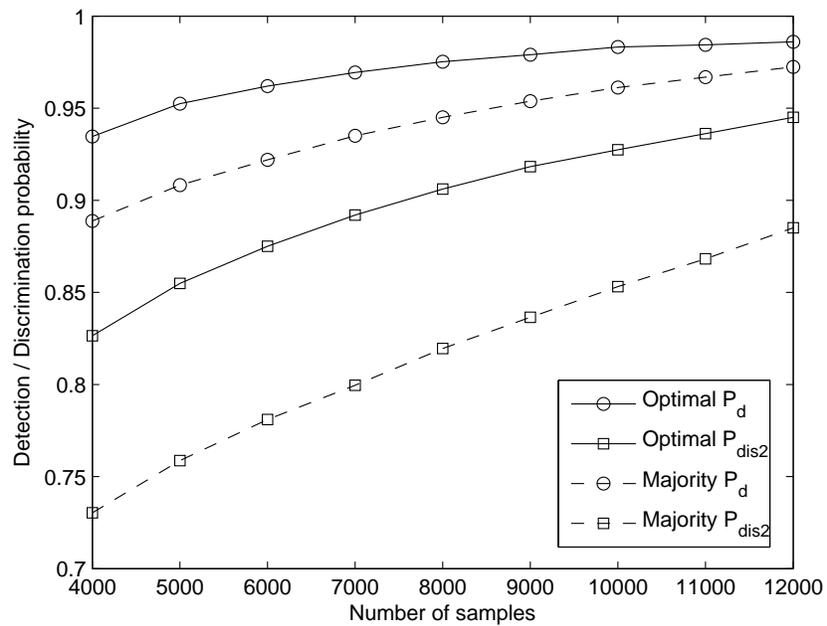}
\caption{Detection probability and discrimination probability  versus number of samples for these two cooperative sensing methods with $K=5$, SNR$=-12$dB.} 
\label{fig:detect_and_discrim_Prob}
\end{figure}

\begin{figure}
\centering
\includegraphics[width=120mm]{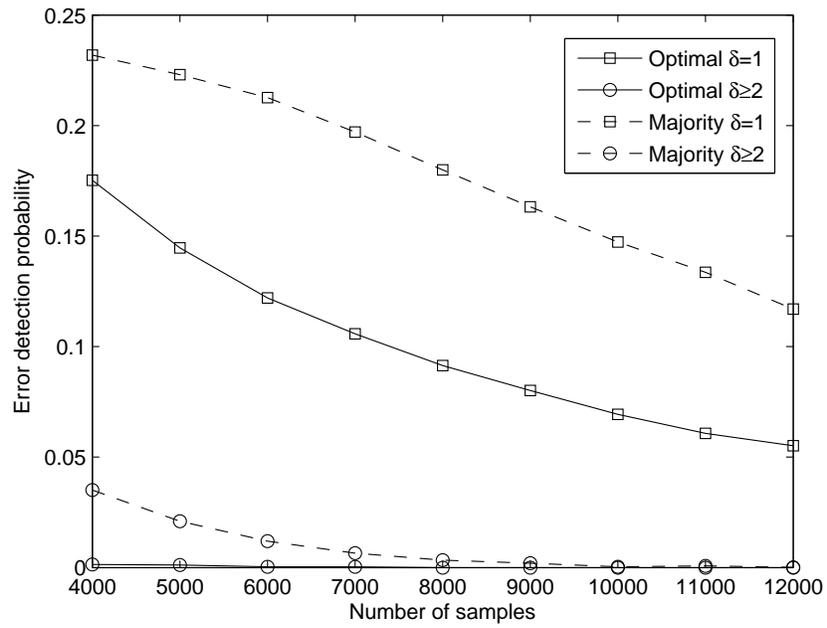}
\caption{Error detection probability versus the number of samples for $K=5$, SNR$=-12$dB.} 
\label{fig:Error_Detection_Prob}
\end{figure}

\begin{figure}
\centering
\includegraphics[width=120mm]{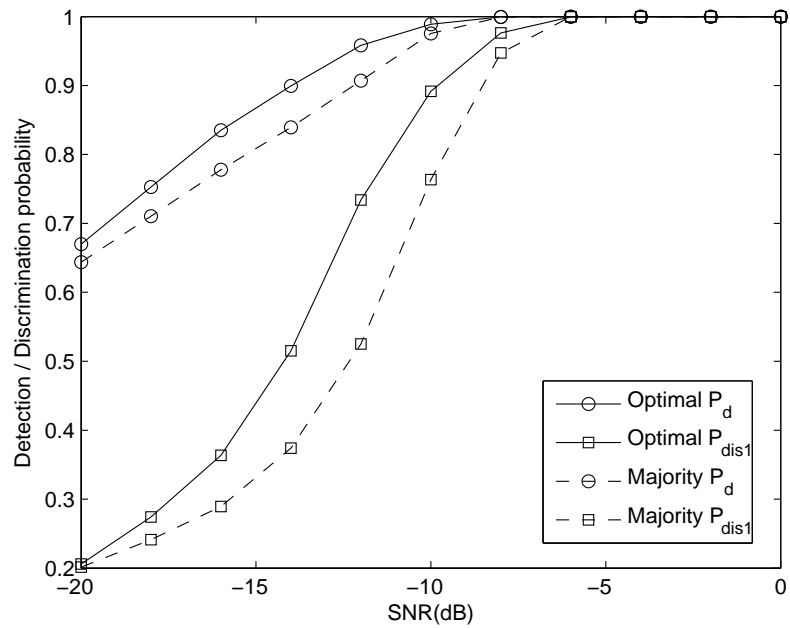}
\caption{Detection probability and discrimination probability in
cooperative sensing versus SNR with $K=5$, $M=5000$.}
\label{fig:CoSensing_Pd_Pdis_vs_SNR}
\end{figure}

\begin{figure}
\centering
\includegraphics[width=120mm]{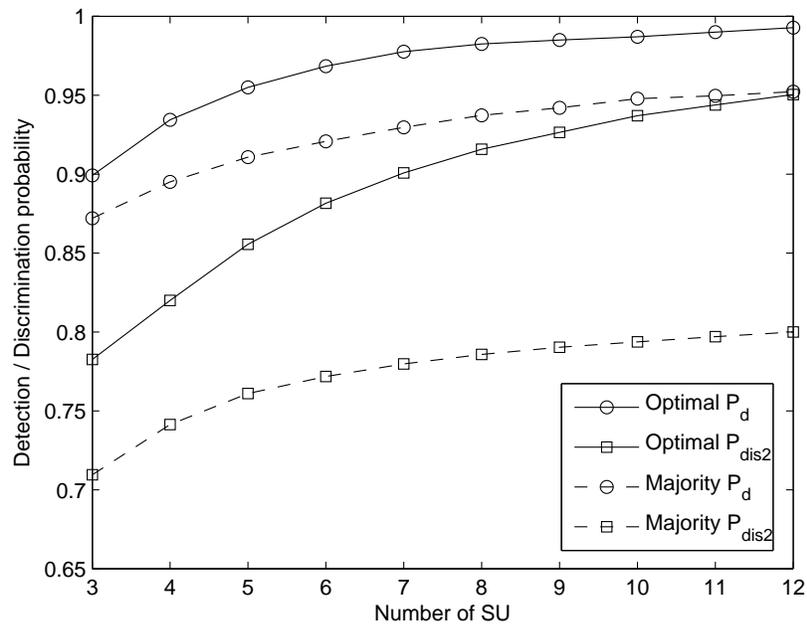}
\caption{Detection probability and discrimination probability versus
the number of SU for SNR$=-12$dB, $M=5000$.}
\label{fig:4l_Pd_and_Pdis_vs_K}
\end{figure}

\end{document}